\documentclass[english]{article}
\usepackage[T1]{fontenc}
\usepackage[latin9]{inputenc}
\usepackage{geometry}
\geometry{verbose,tmargin=2cm,bmargin=2cm,lmargin=2cm,rmargin=1cm,headheight=1cm,headsep=1cm,footskip=1cm}
\setlength{\parskip}{\medskipamount}
\setlength{\parindent}{0pt}
\usepackage{xcolor}
\usepackage{pdfcolmk}
\usepackage{amsthm}
\usepackage{amsmath}
\usepackage{amssymb}
\usepackage{esint}
\usepackage[numbers]{natbib}
\PassOptionsToPackage{normalem}{ulem}
\usepackage{ulem}

\makeatletter

\providecolor{lyxadded}{rgb}{0,0,1}
\providecolor{lyxdeleted}{rgb}{1,0,0}

\theoremstyle{plain}
\newtheorem{thm}{\protect\theoremname}

\usepackage{nips15submit_e,times}
\usepackage{hyperref}
\usepackage{url}

\nipsfinalcopy

\makeatother

\usepackage{babel}
\providecommand{\theoremname}{Theorem}

\begin{document}
\global\long\def\bb{\bm{\beta}}
\global\long\def\bt{\bm{\theta}}
\global\long\def\bmu{\bm{\mu}}
\global\long\def\bl{\bm{\lambda}}
\global\long\def\be{\bm{\eta}}
\global\long\def\E{\mathbb{E}}
\global\long\def\by{\bm{y}}
\global\long\def\bys{\bm{y^{\star}}}
\global\long\def\bS{\bm{\Sigma}}
\global\long\def\N{\mathcal{N}}
\global\long\def\I{\mathds{1}}
\global\long\def\A{\mathcal{A}}
\global\long\def\Ml{\mathcal{M}_{l}}
\global\long\def\Q{\mathcal{Q}}
\global\long\def\epinf{EP^{\infty}}
\global\long\def\O{\mathcal{O}}

\title{Bounding errors of Expectation-Propagation}

\author{Guillaume Dehaene\\
University of Geneva\\
\texttt{guillaume.dehaene@gmail.com}\And Simon Barthelmé\\
CNRS, Gipsa-lab\\
\texttt{simon.barthelme@gipsa-lab.fr}}
\maketitle
\begin{abstract}
Expectation Propagation is a very popular algorithm for variational
inference, but comes with few theoretical guarantees. In this article,
we prove that the approximation errors made by EP can be bounded.
Our bounds have an asymptotic interpretation in the number $n$ of
datapoints, which allows us to study EP's convergence with respect
to the true posterior. In particular, we show that EP converges at
a rate of $\O(n^{-2})$ for the mean, up to an order of magnitude
faster than the traditional Gaussian approximation at the mode. We
also give similar asymptotic expansions for moments of order 2 to
4, as well as excess Kullback-Leibler cost (defined as the additional
KL cost incurred by using EP rather than the ideal Gaussian approximation).
All these expansions highlight the superior convergence properties
of EP. Our approach for deriving those results is likely applicable
to many similar approximate inference methods. In addition, we introduce
bounds on the moments of log-concave distributions that may be of
independent interest. 
\end{abstract}

\section*{Introduction}

Expectation Propagation (EP, \citealp{Minka:EP}) is an efficient
approximate inference algorithm that is known to give good approximations,
to the point of being almost exact in certain applications \citep{KussRasmussen:AssessingApproxInfGP,NickishRasmussen:ApproxGaussianProcClass}.
It is surprising that, while the method is empirically very successful,
there are few theoretical guarantees on its behavior. Indeed, most
work on EP has focused on efficiently implementing the method in various
settings. Theoretical work on EP mostly represents new justifications
of the method which, while they offer intuitive insight, do not give
mathematical proofs that the method behaves as expected. One recent
breakthrough is due to \citet{DehaeneBarthelme:EPLargeDataLimit}
who prove that, in the large data-limit, the EP iteration behaves
like a Newton search and its approximation is asymptotically exact.
However, it remains unclear how good we can expect the approximation
to be when we have only finite data. In this article, we offer a characterization
of the quality of the EP approximation in terms of the worst-case
distance between the true and approximate mean and variance.

When approximating a probability distribution $p(x)$ that is, for
some reason, close to being Gaussian, a natural approximation to use
is the Gaussian with mean equal to the mode (or argmax) of $p(x)$
and with variance the inverse log-Hessian at the mode. We call it
the Canonical Gaussian Approximation (CGA), and its use is usually
justified by appealing to the Bernstein-von Mises theorem, which shows
that, in the limit of a large amount of independent observations,
posterior distributions tend towards their CGA. This powerful justification,
and the ease with which the CGA is computed (finding the mode can
be done using Newton methods) makes it a good reference point for
any method like EP which aims to offer a better Gaussian approximation
at a higher computational cost. In section \ref{sec:Background, EP algorithm, etc},
we introduce the CGA and the EP approximation. In section \ref{sec: theory and theoretical bounds},
we give our theoretical results bounding the quality of EP approximations.

\section{Background\label{sec:Background, EP algorithm, etc}}

In this section, we present the CGA and give a short introduction
to the EP algorithm. In-depth descriptions of EP can be found in \citet{Minka:DivMeasuresMP,Seeger:EPExpFam,BishopPRML,Raymond:ExpectationPropagation}.

\subsection{The Canonical Gaussian Approximation }

What we call here the CGA is perhaps the most common approximate inference
method in the machine learning cookbook. It is often called the ``Laplace
approximation'', but this is a misnomer: the Laplace approximation
refers to approximating the integral $\int p$ from the integral of
the CGA. The reason the CGA is so often used is its compelling simplicity:
given a target distribution $p(x)=\exp\left(-\phi\left(x\right)\right)$,
we find the mode $x^{\star}$ and compute the second derivatives of
$\phi$ at $x^{\star}$:
\begin{eqnarray*}
x^{\star} & = & \mbox{argmin}\phi(x)\\
\beta^{\star} & = & \phi''\left(x^{\star}\right)
\end{eqnarray*}

to form a Gaussian approximation $q(x)=\mathcal{N}\left(x|x^{\star},\frac{1}{\beta^{\star}}\right)\approx p(x)$.
The CGA is effectively just a second-order Taylor expansion, and its
use is justified by the Bernstein-von Mises theorem \citep{DasGupta:AsymptoticTheoryStats},
which essentially says that the CGA becomes exact in the large-data
(large-$n$) asymptotic limit. Roughly, if $p_{n}(x)\propto\prod_{i=1}^{n}p\left(y_{i}|x\right)p_{0}\left(x\right)$,
where $y_{1}\ldots y_{n}$ represent independent datapoints, then
$\lim_{n\rightarrow\infty}p_{n}\left(x\right)=\mathcal{N}\left(x|x_{n}^{\star},\frac{1}{\beta_{n}^{\star}}\right)$
in total variation.

\subsection{CGA vs Gaussian EP}

Gaussian EP, as its name indicates, provides an alternative way of
computing a Gaussian approximation to a target distribution. There
is broad overlap between the problems where EP can be applied and
the problems where the CGA can be used, with EP coming at a higher
cost. Our contribution is to show formally that the higher computational
cost for EP may well be worth bearing, as EP approximations can outperform
CGAs by an order of magnitude. To be specific, we focus on the \emph{moment
estimates }(mean and covariance) computed by EP and CGA, and derive
bounds on their distance to the true mean and variance of the target
distribution. Our bounds have an asymptotic interpretation, and under
that interpretation we show for example that the mean returned by
EP is within an order of $\O\left(n^{-2}\right)$ of the true mean,
where $n$ is the number of datapoints. For the CGA, which uses the
mode as an estimate of the mean, we exhibit a $\O\left(n^{-1}\right)$
upper bound, and we compute the error term responsible for this $\O\left(n^{-1}\right)$
behavior. This enables us to show that, in the situations in which
this error is indeed $\O\left(n^{-1}\right)$, EP is better than the
CGA.

\subsection{The EP algorithm\label{sub:The-EP-algorithm}}

We consider the task of approximating a probability distribution over
a random-variable $\mathcal{X}:$ $p(x)$, which we call the \emph{target
distribution}. $\mathcal{X}$ can be high-dimensional, but for simplicity,
we focus on the one-dimensional case. One important hypothesis that
makes EP feasible is that $p(x)$ factorizes into $n$ \emph{simple
}factor terms: 
\[
p(x)=\prod_{i}f_{i}(x)
\]

EP proposes to approximate each $f_{i}(x)$ (usually referred to as
\emph{sites}) by a Gaussian function $q_{i}(x)$ (referred to as the
\emph{site-approximations}). It is convenient to use the parametrization
of Gaussians in terms of natural parameters:
\[
q_{i}\left(x|r_{i},\beta_{i}\right)\propto\exp\left(r_{i}x-\beta_{i}\frac{x^{2}}{2}\right)
\]

which makes some of the further computations easier to understand.
Note that EP could also be used with other exponential approximating
families. These Gaussian approximations are computed iteratively.
Starting from a current approximation $\left(q_{i}^{t}\left(x|r_{i}^{t},\beta_{i}^{t}\right)\right)$,
we select a site for update with index i. We then:
\begin{itemize}
\item Compute the \emph{cavity} distribution $q_{-i}^{t}(x)\propto\prod_{j\neq1}q_{j}^{t}(x)$.
This is very easy in natural parameters:
\[
q_{-i}(x)\propto\exp\left(\left(\sum_{j\neq i}r_{j}^{t}\right)x-\left(\sum_{j\neq i}\beta_{j}^{t}\right)\frac{x^{2}}{2}\right)
\]

\item Compute the \emph{hybrid} distribution $h_{i}^{t}(x)\propto q_{-i}^{t}(x)f_{i}(x)$
and its mean and variance
\item Compute the Gaussian which minimizes the Kullback-Leibler divergence
to the hybrid, ie the Gaussian with same mean and variance:
\[
\mathcal{P}(h_{i}^{t})=\underset{q}{\mbox{argmin}}\left(KL\left(h_{i}^{t}|q\right)\right)
\]

\item Finally, update the approximation of $f_{i}$:
\[
q_{i}^{t+1}=\frac{\mathcal{P}(h_{i}^{t})}{q_{-i}^{t}}
\]
where the division is simply computed as a subtraction between natural
parameters
\end{itemize}
We iterate these operations until a fixed point is reached, at which
point we return a Gaussian approximation of $p(x)\approx\prod q_{i}(x)$.

\subsection{The ``EP-approximation''}

In this work, we will characterize the quality of an EP approximation
of $p(x)$. We define this to be any fixed point of the iteration
presented in section \ref{sub:The-EP-algorithm}, which could all
be returned by the algorithm. It is known that EP will have at least
one fixed-point \citep{Minka:EP}, but it is unknown under which conditions
the fixed-point is unique. We conjecture that, when all sites are
log-concave (one of our hypotheses to control the behavior of EP),
it is in fact unique but we can't offer a proof yet. If $p\left(x\right)$
isn't log-concave, it is straightforward to construct examples in
which EP has multiple fixed-points. These open questions won't matter
for our result because we will show that all fixed-points of EP (should
there be more than one) produce a good approximation of $p\left(x\right)$.

Fixed points of EP have a very interesting characterization. If we
note $q_{i}^{*}$ the site-approximations at a given fixed-point,
$h_{i}^{*}$ the corresponding hybrid distributions, and $q^{*}$
the global approximation of $p(x)$, then the mean and variance of
all the hybrids and $q^{*}$ is the same%
\footnote{For non-Gaussian approximations, the expected values of all sufficient
statistics of the exponential family are equal.%
}. As we will show in section \ref{sub:Computing-bounds-on EP }, this
leads to a very tight bound on the possible positions of these fixed-points.

\subsection{Notation}

We will use repeatedly the following notation. $p(x)=\prod_{i}f_{i}(x)$
is the target distribution we want to approximate. The sites $f_{i}(x)$
are each approximated by a Gaussian site-approximation $q_{i}(x)$
yielding an approximation to $p(x)\approx q(x)=\prod_{i}q_{i}(x)$.
The hybrids $h_{i}(x)$ interpolate between $q(x)$ and $p(x)$ by
replacing one site approximation $q_{i}(x)$ with the true site $f_{i}(x)$.

Our results make heavy use of the log-functions of the sites and the
target distribution. We note $\phi_{i}(x)=-\log\left(f_{i}(x)\right)$
and $\phi_{p}(x)=-\log\left(p(x)\right)=\sum\phi_{i}(x)$. We will
introduce in section \ref{sec: theory and theoretical bounds} hypotheses
on these functions. Parameter $\beta_{m}$ controls their minimum
curvature and parameters $K_{d}$ control the maximum $d^{th}$ derivative.

We will always consider fixed-points of EP, where the mean and variance
under all hybrids and $q(x)$ is identical. We will note these common
values: $\mu_{EP}$ and $v_{EP}$. We will also refer to the third
and fourth centered moment of the hybrids, denoted by $m_{3}^{i},m_{4}^{i}$
and to the fourth moment of $q(x)$ which is simply $3v_{EP}^{2}$.
We will show how all these moments are related to the true moments
of the target distribution which we will note $\mu,v$ for the mean
and variance, and $m_{3}^{p},m_{4}^{p}$ for the third and fourth
moment. We also investigate the quality of the CGA: $\mu\approx x^{\star}$
and $v\approx\left[\phi_{p}^{''}(x^{\star})\right]^{-1}$ where $x^{\star}$
is the the mode of $p(x)$.

\section{Results \label{sec: theory and theoretical bounds}}

In this section, we will give tight bounds on the quality of the EP
approximation (ie: of fixed-points of the EP iteration). Our results
lean on the properties of log-concave distributions \citep{SaumardWellner:LogConcavityReview}.
In section \ref{sub:Log-concave-distributions-are constrained}, we
introduce new bounds on the moments of log-concave distributions.
The bounds show that those distributions are in a certain sense close
to being Gaussian. We then apply these results to study fixed points
of EP, where they enable us to compute bounds on the distance between
the mean and variance of the true distribution $p(x)$ and of the
approximation given by EP, which we do in section \ref{sub:Computing-bounds-on EP }. 

Our bounds require us to assume that all sites $f_{i}(x)$ are $\beta_{m}$-strongly
log-concave with slowly-changing log-function. That is, if we note
$\phi_{i}(x)=-\log\left(f_{i}(x)\right)$:
\begin{eqnarray}
\forall i\ \forall x\ \phi_{i}^{''}(x) & \geq & \beta_{m}>0\label{eq:hypothesis-LC-strict}\\
\forall i\ \forall d\in[3,4,5,6]\ \left|\phi_{i}^{(d)}(x)\right| & \leq & K_{d}\label{eq:hypothesis-slow}
\end{eqnarray}

The target distribution $p(x)$ then inherits those properties from
the sites. Noting $\phi_{p}(x)=-\log\left(p(x)\right)=\sum_{i}\phi_{i}(x)$,
then $\phi_{p}$ is $n\beta_{m}$-strongly log-concave and its higher
derivatives are bounded:
\begin{eqnarray}
\forall x,\ \phi_{p}^{''}(x) & \geq & n\beta_{m}\label{eq: strictly LC for target distribution}\\
\forall d\in[3,4,5,6]\ \left|\phi_{p}^{(d)}(x)\right| & \leq & nK_{d}\label{eq: slowly changing log for target distributoin}
\end{eqnarray}

A natural concern here is whether or not our conditions on the sites
are of practical interest. Indeed, strongly-log-concave likelihoods
are rare. We picked these strong regularity conditions because they
make the proofs relatively tractable (although still technical and
long). The proof technique carries over to more complicated, but more
realistic, cases. One such interesting generalization consists of
the case in which $p(x)$ and all hybrids at the fixed-point are log-concave
with slowly changing log-functions (with possibly differing constants).
In such a case, while the math becomes more unwieldy, similar bounds
as ours can be found, greatly extending the scope of our results.
The results we present here should thus be understood as a stepping
stone and not as the final word on the quality of the EP approximation:
we have focused on providing a rigorous but extensible proof.

\subsection{Log-concave distributions are strongly constrained\label{sub:Log-concave-distributions-are constrained}}

Log-concave distributions have many interesting properties. They are
of course unimodal, and the family is closed under both marginalization
and multiplication. For our purposes however, the most important property
is a result due to Brascamp and Lieb \citep{BrascampLieb:BestConstantsYoungsIneq},
which bounds their even moments. We give here an extension in the
case of log-concave distributions with slowly changing log-functions
(as quantified by eq. \eqref{eq:hypothesis-slow}). Our results show
that these are close to being Gaussian.

The Brascamp-Lieb inequality states that, if $LC(x)\propto\exp\left(-\phi(x)\right)$
is $\beta_{m}$-strongly log-concave (ie: $\phi^{''}(x)\geq\beta_{m}$),
then centered even moments of $LC$ are bounded by the corresponding
moments of a Gaussian with variance $\beta_{m}^{-1}$. If we note
these moments $m_{2k}$ and $\mu_{LC}=E_{LC}(x)$ the mean of $LC$:
\begin{eqnarray}
m_{2k} & = & E_{LC}\left(\left(x-\mu_{LC}\right)^{2k}\right)\nonumber \\
m_{2k} & \leq & (2k-1)!!\beta_{m}^{-k}\label{eq:Brascamp-Lieb inequality-1-1}
\end{eqnarray}

where $(2k-1)!!$ is the double factorial: the product of all odd
terms from 1 to $2k-1$. $3!!=3$, $5!!=15$, $7!!=105$, etc. This
result can be understood as stating that a log-concave distribution
must have a small variance, but doesn't generally need to be close
to a Gaussian.

With our hypothesis of slowly changing log-functions, we were able
to improve on this result. Our improved results include a bound on
\emph{odd }moments, as well as first order expansions of even moments
(eqs. \eqref{eq:bound on phi-prime}-\eqref{eq:first order of m4}). 

Our extension to the Brascamp-Lieb inequality is as follows. If $\phi$
is slowly changing in the sense that some of its higher derivatives
are bounded, as per eq. \ref{eq:hypothesis-slow}, then we can give
a bound on $\phi^{'}(\mu_{LC})$ (showing that $\mu_{LC}$ is close
to the mode $x^{\star}$ of $LC$, see eqs. \eqref{eq: restating another eq for p(x)}
to \eqref{eq: bounding mode-mean distance}) and $m_{3}$ (showing
that $LC$ is mostly symmetric):
\begin{eqnarray}
\left|\phi^{'}(\mu_{LC})\right| & \leq & \frac{K_{3}}{2\beta_{m}}\label{eq:bound on phi-prime}\\
\left|m_{3}\right| & \leq & \frac{2K_{3}}{\beta_{m}^{3}}\label{eq:bound on M3}
\end{eqnarray}

and we can compute the first order expansions of $m_{2}$ and $m_{4}$,
and bound the errors in terms of $\beta_{m}$ and the $K$'s :
\begin{eqnarray}
\left|m_{2}^{-1}-\phi^{''}(\mu_{LC})\right| & \leq & \frac{K_{3}^{2}}{\beta_{m}^{2}}+\frac{K_{4}}{2\beta_{m}}\label{eq:first order of m2}\\
\left|\phi^{''}(\mu_{LC})m_{4}-3m_{2}\right| & \leq & \frac{19}{2}\frac{K_{3}^{2}}{\beta_{m}^{4}}+\frac{5}{2}\frac{K_{4}}{\beta_{m}^{3}}\label{eq:first order of m4}
\end{eqnarray}

With eq. \eqref{eq:first order of m2} and \eqref{eq:first order of m4},
we see that $m_{2}\approx\left(\phi^{''}(\mu_{LC})\right)^{-1}$ and
$m_{4}\approx3\left(\phi^{''}(\mu_{LC})\right)^{-2}$ and, in that
sense, that $LC(x)$ is close to the Gaussian with mean $\mu_{LC}$
and inverse-variance $\phi^{''}\left(\mu_{LC}\right)$.

These expansions could be extended to further orders and similar formulas
can be found for the other moments of $LC(x)$: for example, any odd
moments can be bounded by $\left|m_{2k+1}\right|\leq C_{k}K_{3}\beta_{m}^{-(k+1)}$
(with $C_{k}$ some constant) and any even moment can be found to
have first-order expansion: $m_{2k}\approx\left(2k-1\right)!!\left(\phi^{''}(\mu_{LC})\right)^{-k}$.
The proof, as well as more detailed results, can be found in the Supplement.

Note how our result relates to the Bernstein-von Mises theorem, which
says that, in the limit of a large amount of observations, a posterior
$p(x)$ tends towards its CGA. If we consider the posterior obtained
from $n$ likelihood functions that are all log-concave and slowly
changing, our results show the slightly different result that the
moments of that posterior are \emph{close }to those of a Gaussian
with mean $\mu_{LC}$ (instead of $x_{LC}^{\star}$) and inverse-variance
$\phi^{''}\left(\mu_{LC}\right)$ (instead of $\phi^{''}\left(x_{LC}^{\star}\right)$)
. This point is critical. While the CGA still ends up capturing the
limit behavior of $p$, as $\mu_{LC}\rightarrow x^{\star}$ in the
large-data limit (see eq. \eqref{eq: bounding mode-mean distance}
below), an approximation that would return the Gaussian approximation
at $\mu_{LC}$ would be better. This is essentially what EP does,
and this is how it improves on the CGA.

\subsection{Computing bounds on EP approximations\label{sub:Computing-bounds-on EP }}

In this section, we consider a given EP fixed-point $q_{k}^{*}\left(x|r_{i},\beta_{i}\right)$
and the corresponding approximation of $p(x)$: $q^{*}\left(x|r=\sum r_{i},\beta=\sum\beta_{i}\right)$.
We will show that the expected value and variance of $q^{*}$(resp.
$\mu_{EP}$ and $v_{EP}$) are close to the true mean and variance
of $p$ (resp. $\mu$ and $v$), and also investigate the quality
of the CGA ($\mu\approx x^{\star}$, $v\approx\left[\phi_{p}^{''}(x^{\star})\right]^{-1}$). 

Under our assumptions on the sites (eq. \eqref{eq:hypothesis-LC-strict}
and \eqref{eq:hypothesis-slow}), we are able to derive bounds on
the quality of the EP approximation. The proof is quite involved and
long, and we will only present it in the Supplement. In the main text,
we give a partial version: we detail the first step of the demonstration,
which consists of computing a rough bound on the distance between
the true mean $\mu$, the EP approximation $\mu_{EP}$ and the mode
$x^{\star}$, and give an outline of the rest of the proof.

Let's show that $\mu$, $\mu_{EP}$ and $x^{\star}$ are all close
to one another. We start from eq. \eqref{eq:bound on phi-prime} applied
to $p(x)$:
\begin{equation}
\left|\phi_{p}^{'}(\mu)\right|\leq\frac{K_{3}}{2\beta_{m}}\label{eq: restating another eq for p(x)}
\end{equation}
which tells us that $\phi_{p}^{'}(\mu)\approx0$. $\mu$ must thus
be close to $x^{\star}$. Indeed:
\begin{eqnarray}
\left|\phi_{p}^{'}(\mu)\right| & = & \left|\phi_{p}^{'}(\mu)-\phi_{p}^{'}(x^{\star})\right|\label{eq: from phi-prime bound to e-m START}\\
 & = & \left|\phi_{p}^{''}\left(\xi\right)\left(\mu-x^{\star}\right)\right|\ \xi\in[\mu,x^{\star}]\nonumber \\
 & \geq & \left|\phi_{p}^{''}\left(\xi\right)\right|\left|\mu-x^{\star}\right|\nonumber \\
 & \geq & n\beta_{m}\left|\mu-x^{\star}\right|\label{eq:from phi-prime bound to e-m END}
\end{eqnarray}

Combining eq. \eqref{eq: restating another eq for p(x)} and \eqref{eq:from phi-prime bound to e-m END},
we finally have:
\begin{equation}
\left|\mu-x^{\star}\right|\leq n^{-1}\frac{K_{3}}{2\beta_{m}^{2}}\label{eq: bounding mode-mean distance}
\end{equation}
Let's now show that $\mu_{EP}$ is also close to $x^{\star}$. We
proceed similarly, starting from eq. \eqref{eq:bound on phi-prime}
but applied to all hybrids $h_{i}(x)$:
\begin{equation}
\forall i\ \left|\phi_{i}^{'}(\mu_{EP})+\beta_{-i}\mu_{EP}-r_{-i}\right|\leq n^{-1}\frac{K_{3}}{2\beta_{m}}\label{eq: restating an equation for hybrids}
\end{equation}
which is not really equivalent to eq. \eqref{eq: restating another eq for p(x)}
yet. Recall that $q(x|r,\beta)$ has mean $\mu_{EP}$: we thus have:
$r=\beta\mu_{EP}$. Which gives:
\begin{eqnarray}
\left(\sum_{i}\beta_{-i}\right)\mu_{EP} & = & \left((n-1)\beta\right)\mu_{EP}\nonumber \\
 & = & (n-1)r\nonumber \\
 & = & \sum_{i}r_{-i}\label{eq: sum of beta_-i vs sum of r_-i}
\end{eqnarray}
If we sum all terms in eq. \eqref{eq: restating an equation for hybrids},
the $\beta_{-i}\mu_{EP}$ and $r_{-i}$ thus cancel, leaving us with:
\begin{equation}
\left|\phi_{p}^{'}(\mu_{EP})\right|\leq\frac{K_{3}}{2\beta_{m}}\label{eq: equivalent to earlier equation}
\end{equation}
which is equivalent to eq. \eqref{eq: restating another eq for p(x)}
but for $\mu_{EP}$ instead of $\mu$. This shows that $\mu_{EP}$
is, like $\mu$, close to $x^{\star}$:
\begin{equation}
\left|\mu_{EP}-x^{\star}\right|\leq n^{-1}\frac{K_{3}}{2\beta_{m}^{2}}\label{eq: bound of e_ep - m}
\end{equation}
At this point, we can show that, since they are both close to $x^{\star}$
(eq. \eqref{eq: bounding mode-mean distance} and \eqref{eq: bound of e_ep - m}),
$\mu=\mu_{EP}+\O\left(n^{-1}\right)$, which constitutes the first
step of our computation of bounds on the quality of EP.

After computing this, the next step is evaluating the quality of the
approximation of the variance, via computing $\left|v^{-1}-v_{EP}^{-1}\right|$
for EP and $\left|v^{-1}-\phi_{p}^{''}(x^{\star})\right|$ for the
CGA, from eq. \eqref{eq:first order of m2}. In both cases, we find:
\begin{eqnarray}
v^{-1} & = & v_{EP}^{-1}+\O\left(1\right)\label{eq: quality of EP approx of precision}\\
 & = & \phi_{p}^{''}(x^{\star})+\O\left(1\right)\label{eq:quality of CGA approx of precision}
\end{eqnarray}
Since $v^{-1}$ is of order $n$, because of eq. \eqref{eq:Brascamp-Lieb inequality-1-1}
(Brascamp-Lieb upper bound on variance), this is a decent approximation:
the relative error is of order $n^{-1}$.

We can find similarly that both EP and CGA do a good job of finding
a good approximation of the fourth moment of $p$: $m_{4}$. For EP
this means that the fourth moment of each hybrid and of $q$ are a
close match:
\begin{eqnarray}
\forall i\ m_{4} & \approx & m_{4}^{i}\approx3v_{EP}^{2}\label{eq: fourth moment is matched by EP}\\
 & \approx & 3\left(\phi_{p}^{''}(m)\right)^{-2}\label{eq: fourth moment is matched by CGA}
\end{eqnarray}
In contrast, the third moment of the hybrids doesn't match at all
the third moment of $p$, but their sum does ! 
\begin{equation}
m_{3}\approx\sum_{i}m_{3}^{i}\label{eq: sum of third hybrid moment matches}
\end{equation}
Finally, we come back to the approximation of $\mu$ by $\mu_{EP}$.
These obey two very similar relationships:

\begin{eqnarray}
\phi_{p}^{'}(\mu)+\phi_{p}^{(3)}(\mu)\frac{v}{2} & = & \O\left(n^{^{-1}}\right)\label{eq: first order expansion for mean of P}\\
\phi_{p}^{'}(\mu_{EP})+\phi_{p}^{(3)}(\mu_{EP})\frac{v_{EP}}{2} & = & \O\left(n^{^{-1}}\right)
\end{eqnarray}

Since $v=v_{EP}+\O\left(n^{-2}\right)$ (a slight rephrasing of eq.
\eqref{eq: quality of EP approx of precision}), we finally have:
\begin{equation}
\mu=\mu_{EP}+\O\left(n^{-2}\right)
\end{equation}

We summarize the results in the following theorem:
\begin{thm}
Characterizing fixed-points of EP

Under the assumptions given by eq. \eqref{eq:hypothesis-LC-strict}
and \eqref{eq:hypothesis-slow} (log-concave sites with slowly changing
log), we can bound the quality of the EP approximation and the CGA:
\begin{eqnarray*}
\left|\mu-x^{*}\right| & \leq & n^{-1}\frac{K_{3}}{2\beta_{m}^{2}}\\
\left|\mu-\mu_{EP}\right| & \leq & B_{1}(n)=\O\left(n^{-2}\right)\\
\left|v^{-1}-\phi_{p}^{''}(x^{*})\right| & \leq & \frac{2K_{3}^{2}}{\beta_{m}^{2}}+\frac{K_{4}}{2\beta_{m}}\\
\left|v^{-1}-v_{EP}^{-1}\right| & \leq & B_{2}(n)=\O\left(1\right)
\end{eqnarray*}

We give the full expression for the bounds $B_{1}$ and $B_{2}$ in
the Supplement
\end{thm}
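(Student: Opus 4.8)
The plan is to obtain the two ``static'' bounds --- the mode--mean distance $|\mu-x^{*}|$ and the CGA precision error $|v^{-1}-\phi_p''(x^{*})|$ --- directly from the log-concave moment inequalities of Section~\ref{sub:Log-concave-distributions-are constrained} applied to $p$ (which is $n\beta_m$-strongly log-concave with $|\phi_p^{(d)}|\le nK_d$), and then to transport those same inequalities onto $\mu_{EP}$ and $v_{EP}$ by exploiting the defining feature of an EP fixed point: every hybrid $h_i$ and the global approximation $q^{*}$ share the mean $\mu_{EP}$ and variance $v_{EP}$. The static bounds are quick: eq.~\eqref{eq:bound on phi-prime} gives $|\phi_p'(\mu)|\le K_3/(2\beta_m)$, and the mean-value chain of eqs.~\eqref{eq: from phi-prime bound to e-m START}--\eqref{eq:from phi-prime bound to e-m END} converts this into $|\mu-x^{*}|\le n^{-1}K_3/(2\beta_m^2)$; eq.~\eqref{eq:first order of m2} applied to $p$ gives $|v^{-1}-\phi_p''(\mu)|\le K_3^2/\beta_m^2+K_4/(2\beta_m)$ (the powers of $n$ cancel), and adding $|\phi_p''(\mu)-\phi_p''(x^{*})|\le nK_3\,|\mu-x^{*}|$ produces the stated bound.

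For the EP quantities I would first prove an auxiliary lemma: since the log-function of $h_i$, namely $\phi_i(x)+\beta_{-i}x^2/2-r_{-i}x$, is $(\beta_m+\beta_{-i})$-strongly log-concave, Brascamp--Lieb gives $v_{EP}=\mathrm{Var}(h_i)\le(\beta_m+\beta_{-i})^{-1}$, hence $\beta_i=v_{EP}^{-1}-\beta_{-i}\ge\beta_m$ and therefore every cavity precision obeys $\beta_{-i}\ge(n-1)\beta_m=\Theta(n)$; moreover the derivatives of $\phi_{h_i}$ of order $3$ to $6$ are still bounded by the $K_d$ (the cavity is quadratic). Then eq.~\eqref{eq:bound on phi-prime} applied to each $h_i$ is eq.~\eqref{eq: restating an equation for hybrids}; summing over $i$ and using the identity $\sum_i(\beta_{-i}\mu_{EP}-r_{-i})=0$ of eq.~\eqref{eq: sum of beta_-i vs sum of r_-i} collapses these to $|\phi_p'(\mu_{EP})|\le K_3/(2\beta_m)$, so $|\mu_{EP}-x^{*}|\le n^{-1}K_3/(2\beta_m^2)$ and, with the $p$-bound, $|\mu-\mu_{EP}|=\O(n^{-1})$. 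The same summation applied to eq.~\eqref{eq:first order of m2} for each $h_i$ --- using $\sum_i\phi_{h_i}''(\mu_{EP})=\phi_p''(\mu_{EP})+(n-1)v_{EP}^{-1}$ and per-hybrid errors $\le K_3^2(\beta_m+\beta_{-i})^{-2}+\tfrac12 K_4(\beta_m+\beta_{-i})^{-1}=\O(n^{-1})$ --- yields $|v_{EP}^{-1}-\phi_p''(\mu_{EP})|=\O(1)$; comparing with the $p$-version and bounding $|\phi_p''(\mu)-\phi_p''(\mu_{EP})|\le nK_3\,\O(n^{-1})$ then closes $|v^{-1}-v_{EP}^{-1}|=\O(1)=B_2(n)$.

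The crux is upgrading $|\mu-\mu_{EP}|$ from $\O(n^{-1})$ to $\O(n^{-2})$, which needs one more order of expansion. I would use the integration-by-parts identity $E_{LC}[\phi'(x)]=0$ valid for any $LC\propto e^{-\phi}$, Taylor-expand $\phi'$ to fourth order about the mean, and take expectations. For $p$ the surviving relation is $\phi_p'(\mu)+\tfrac12\phi_p^{(3)}(\mu)\,v=\O(n^{-1})$: the term $\tfrac16\phi_p^{(4)}(\mu)m_3^p$ is $\O(n^{-1})$ because eq.~\eqref{eq:bound on M3} applied to $p$ gives $m_3^p=\O(n^{-2})$, and the fifth-derivative remainder is $\O(n^{-1})$ because $|\phi_p^{(5)}|\le nK_5$ and Brascamp--Lieb gives $m_4^p=\O(n^{-2})$. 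Running the identical computation on each $h_i$ about $\mu_{EP}$ and summing --- the cavity contributes only $\beta_{-i}\mu_{EP}-r_{-i}$, which cancels, and the per-hybrid remainders are even smaller since $\beta_{-i}=\Theta(n)$ forces $m_3^i=\O(n^{-3})$ --- gives $\phi_p'(\mu_{EP})+\tfrac12\phi_p^{(3)}(\mu_{EP})\,v_{EP}=\O(n^{-1})$. Subtracting, I would write $\phi_p'(\mu)-\phi_p'(\mu_{EP})=\phi_p''(\zeta)(\mu-\mu_{EP})$ and split $\phi_p^{(3)}(\mu)v-\phi_p^{(3)}(\mu_{EP})v_{EP}$ as $\phi_p^{(3)}(\mu)(v-v_{EP})+(\phi_p^{(3)}(\mu)-\phi_p^{(3)}(\mu_{EP}))v_{EP}$, where $|v-v_{EP}|=v\,v_{EP}\,|v_{EP}^{-1}-v^{-1}|=\Theta(n^{-2})\cdot\O(1)$ and $|\phi_p^{(3)}(\mu)-\phi_p^{(3)}(\mu_{EP})|\le nK_4\,\O(n^{-1})$, so every piece is $\O(n^{-1})$; dividing by $\phi_p''(\zeta)\ge n\beta_m$ then gives $|\mu-\mu_{EP}|=\O(n^{-2})=B_1(n)$, with $B_1$ read off the accumulated constants.

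I expect the main obstacles to be threefold. The first is the bootstrap structure of the last step: the $\O(n^{-2})$ bound is only reachable \emph{after} the rough $\O(n^{-1})$ mean bound and the $\O(1)$ precision bound, because both are needed to control the cross-term $\phi_p^{(3)}(\mu)v-\phi_p^{(3)}(\mu_{EP})v_{EP}$. The second is making the auxiliary $\beta_{-i}$ lemma watertight --- in particular the normalisability of every hybrid at a fixed point and the uniform lower bound $\beta_{-i}\ge(n-1)\beta_m$. The third is purely clerical but unavoidable: propagating every constant through the fourth-order Taylor remainders so that the advertised closed forms $B_1(n)$ and $B_2(n)$ hold for all finite $n$, not merely asymptotically --- this is the bookkeeping that makes the complete argument ``quite involved and long''.
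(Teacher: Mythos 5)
Your proposal is correct and follows essentially the same route as the paper's own proof: the Stein-identity/Taylor expansions (the Brascamp--Lieb extension) applied to $p$ and to each hybrid, cancellation of the cavity terms $\beta_{-i}\mu_{EP}-r_{-i}$ upon summing at a fixed point, the $\beta_i\geq\beta_m$ lemma, and the same bootstrap from the rough $\O\left(n^{-1}\right)$ mean bound and $\O\left(1\right)$ precision bound to $\left|\mu-\mu_{EP}\right|=\O\left(n^{-2}\right)$. The only differences are cosmetic bookkeeping (e.g.\ you absorb the $\phi_{p}^{(3)}$ increment into the $\O\left(n^{-1}\right)$ budget rather than into the coefficient of $\mu-\mu_{EP}$ as the paper does).
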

Note that the order of magnitude of the bound on $\left|\mu-x^{\star}\right|$
is the best possible, because it is attained for certain distributions.
For example, consider a Gamma distribution with natural parameters
$\left(n\alpha,n\beta\right)$ whose mean $\frac{\alpha}{\beta}$
is approximated at order $n^{-1}$ by its mode $\frac{\alpha}{\beta}-\frac{1}{n\beta}$.
More generally, from eq. \eqref{eq: first order expansion for mean of P},
we can compute the first order of the error: 
\begin{equation}
\mu-m\approx-\frac{\phi_{p}^{(3)}(\mu)}{\phi_{p}^{''}(\mu)}\frac{v}{2}\approx-\frac{1}{2}\frac{\phi_{p}^{(3)}(\mu)}{\left[\phi_{p}^{''}(\mu)\right]^{2}}
\end{equation}
which is the term causing the order $n^{-1}$ error. Whenever this
term is significant, it is thus safe to conclude that EP improves
on the CGA.

Also note that, since $v^{-1}$ is of order $n$, the relative error
for the $v^{-1}$ approximation is of order $n^{-1}$ for both methods.
Despite having a convergence rate of the same order, the EP approximation
is demonstrably better than the CGA, as we show next. Let us first
see why the approximation for $v^{-1}$ is only of order 1 for both
methods. The following relationship holds:
\begin{equation}
v^{-1}=\phi_{p}^{''}(\mu)+\phi_{p}^{(3)}(\mu)\frac{m_{3}^{p}}{2v}+\phi_{p}^{(4)}(\mu)\frac{m_{4}^{p}}{3!v}+\O\left(n^{-1}\right)\label{eq: the detailled v^-1 relationship for p}
\end{equation}
In this relationship, $\phi_{p}^{''}(\mu)$ is an order $n$ term
while the rest are order 1. If we now compare this to the CGA approximation
of $v^{-1}$, we find that it fails at multiple levels. First, it
completely ignores the two order 1 terms, and then, because it takes
the value of $\phi_{p}^{''}$ at $x^{\star}$ which is at a distance
of $\O\left(n^{-1}\right)$ from $\mu$, it adds another order 1 error
term (since $\phi_{p}^{(3)}=\O\left(n\right)$). The CGA is thus adding
quite a bit of error, even if each component is of order 1.

Meanwhile, $v_{EP}$ obeys a relationship similar to eq. \eqref{eq: the detailled v^-1 relationship for p}:
\begin{equation}
v_{EP}^{-1}=\phi_{p}^{''}(\mu_{EP})+\sum_{i}\left[\phi_{i}^{(3)}(\mu_{EP})\frac{m_{3}^{i}}{2v_{EP}}\right]+\phi_{p}^{(4)}(\mu_{EP})\frac{3v_{EP}^{2}}{3!v_{EP}}+\O\left(n^{-1}\right)\label{eq: detailled v_EP ^-1 relationship}
\end{equation}
We can see where the EP approximation produces errors. The $\phi_{p}^{''}$
term is well approximated: since $\left|\mu-\mu_{EP}\right|=\O\left(n^{-2}\right)$,
we have $\phi_{p}^{''}(\mu)=\phi_{p}^{''}(\mu_{EP})+\O\left(n^{-1}\right)$.
The term involving $m_{4}$ is also well approximated, and we can
see that the only term that fails is the $m_{3}$ term. The order
1 error is thus entirely coming from this term, which shows that EP
performance suffers more from the skewness of the target distribution
than from its kurtosis.

Finally, note that, with our result, we can get some intuitions about
the quality of the EP approximation using other metrics. For example,
if the most interesting metric is the KL divergence $KL\left(p,q\right)$,
the excess KL divergence from using the EP approximation $q$ instead
of the true minimizer $q_{KL}$ (which has the same mean $\mu$ and
variance $v$ as $p$) is given by:
\begin{eqnarray}
\Delta KL=\int p\log\frac{q_{KL}}{q} & = & \int p(x)\left(-\frac{\left(x-\mu\right)^{2}}{2v}+\frac{\left(x-\mu_{EP}\right)^{2}}{2v_{EP}}-\frac{1}{2}\log\left(\frac{v}{v_{EP}}\right)\right)\label{eq: excess KL divergence}\\
 & = & \frac{1}{2}\left[\frac{v}{v_{EP}}-1-\log\left(\frac{v}{v_{EP}}\right)\right]+\frac{\left(\mu-\mu_{EP}\right)^{2}}{2v_{EP}}\\
 & \approx & \frac{1}{4}\left(\frac{v-v_{EP}}{v_{EP}}\right)^{2}+\frac{\left(\mu-\mu_{EP}\right)^{2}}{2v_{EP}}
\end{eqnarray}

which we recognize as $KL\left(q_{KL},q\right)$. A similar formula
gives the excess KL divergence from using the CGA instead of $q_{KL}$.
For both methods, the variance term is of order $n^{-2}$ (though
it should be smaller for EP), but the mean term is of order $n^{-3}$
for EP while it is of order $n^{-1}$ for the CGA. Once again, EP
is found to be the better approximation.

Finally, note that our bounds are quite pessimistic: the true value
might be a much better fit than we have predicted here. 

A first cause is the bounding of the derivatives of $\log(p)$ (eqs.
\eqref{eq: strictly LC for target distribution},\eqref{eq: slowly changing log for target distributoin}):
while those bounds are correct, they might prove to be very pessimistic.
For example, if the contributions from the sites to the higher-derivatives
cancel each other out, a much lower bound than $nK_{d}$ might apply.
Similarly, there might be another lower bound on the curvature much
higher than $n\beta_{m}$.

Another cause is the bounding of the variance from the curvature.
While applying Brascamp-Lieb requires the distribution to have high
log-curvature everywhere, a distribution with high-curvature close
to the mode and low-curvature in the tails still has very low variance:
in such a case, the Brascamp-Lieb bound is very pessimistic. 

In order to improve on our bounds, we will thus need to use tighter
bounds on the log-derivatives of the hybrids and of the target distribution,
but we will also need an extension of the Brascamp-Lieb result that
can deal with those cases where a distribution is strongly log-concave
around its mode but, in the tails, the log-curvature is much lower.

\section{Conclusion}

EP has been used for now quite some time without any theoretical concrete
guarantees on its performance. In this work, we provide explicit performance
bounds and show that EP is superior to the CGA, in the sense of giving
provably better approximations of the mean and variance. There are
now theoretical arguments for substituting EP to the CGA in a number
of practical problems where the gain in precision is worth the increased
computational cost. This work tackled the first steps in proving that
EP offers an appropriate approximation. Continuing in its tracks will
most likely lead to more general and less pessimistic bounds, but
it remains an open question how to quantify the quality of the approximation
using other distance measures. For example, it would be highly useful
for machine learning if one could show bounds on prediction error
when using EP. We believe that our approach should extend to more
general performance measures and plan to investigate this further
in the future.

\bibliographystyle{unsrtnat}
\bibliography{ref}

\newpage{}

\appendix

\part*{Supplementary information of ``Bounding errors of Expectation-Propagation''}

\section{Improving on the Brascamp-Lieb bound\label{sec:Improving-on-the Brascamp-Lieb bound}}

In this section, we detail our mathematical results concerning the
extension of the Brascamp-Lieb bound.

We will note $LC(x)=\exp\left(-\phi(x)\right)$ a log-concave distribution.
We assume that $\phi$ is strongly convex, and slowly changing, ie:

\begin{eqnarray}
\forall x\ \phi^{''}(x) & \geq & \beta_{m}\label{eq:APPENDIX strongly log-concave}\\
\forall d\in[3,4,5,6]\ \left|\phi^{(d)}(x)\right| & \leq & K_{d}\label{eq:APPENDIX slowly changing log}
\end{eqnarray}

\subsection{The original Brascamp-Lieb theorem}

Let $\mu_{LC}=E_{LC}\left(x\right)$ be the expected value of $LC$.
The original Brascamp-Lieb result \citeyearpar{BrascampLieb:BestConstantsYoungsIneq}
concerns bounding fractional centered moments of $LC$ by the corresponding
fractional moments of a Gaussian of variance $\beta_{m}^{-1}$, centered
at $\mu_{LC}$. Noting $g(x)=\mathcal{N}\left(x|\mu_{LC},\beta_{m}^{-1}\right)$
that Gaussian, we have:
\begin{equation}
\forall\alpha\geq1\ E_{LC}\left(\left|x-\mu_{LC}\right|^{\alpha}\right)\leq E_{g}\left(\left|x-\mu_{LC}\right|^{\alpha}\right)\label{eq:full brascamp-lieb result}
\end{equation}
However, we are not interested in their full result, but only in a
restricted version of it which only concerns even moments. This version
simply reads:
\begin{eqnarray}
\forall k\in\mathbb{N}\ m_{2k}=E_{LC}\left(\left|x-\mu_{LC}\right|^{2k}\right) & \leq & \left(2k-1\right)m_{2k-2}\beta_{m}^{-1}\label{eq:restricted Brascamp-Lieb RECURRENCE}\\
m_{2k} & \leq & \left(2k-1\right)!!\beta_{m}^{-k}\label{eq:restricted Brascamp-Lieb}
\end{eqnarray}
where $\left(2k-1\right)!!$ is the double-factorial: the product
of all odd terms between $1$ and $2k-1$. Eq. \eqref{eq:restricted Brascamp-Lieb RECURRENCE}
might be a new result. Note that equality only occurs when $f(x)=1$
and $LC$ is Gaussian. Note also that the bounds on the higher derivatives
of $\phi$ are not needed for this result, but only for our extension.

We offer here a proof of eq. \eqref{eq:restricted Brascamp-Lieb RECURRENCE}
(from which eq. \eqref{eq:restricted Brascamp-Lieb} is a trivial
consequence), which is slightly different from Brascamp \& Lieb's
original proof. We believe this proof to be original, though it is
still quite similar to the original proof.
\begin{proof}
Let's decompose $LC(x)$ into two parts:
\begin{itemize}
\item $g(x)=\mathcal{N}\left(x|\mu_{LC},\beta_{m}^{-1}\right)$ the bounding
Gaussian with same mean as $LC$
\item $f(x)=\frac{LC(x)}{g(x)}$ the remainder
\end{itemize}
$f$ is easily shown to be log-concave, which means that it is unimodal.
We will note $x^{\star}$ the mode of $f$. $f$ is increasing on
$]-\infty,x^{\star}]$ and decreasing on $[x^{\star},\infty[$. We
thus know the sign of $f^{'}(x)$:
\begin{equation}
\mbox{sign}\left(f^{'}(x)\right)=\mbox{sign}\left(x^{\star}-x\right)\label{eq: sign of f-prime}
\end{equation}

Consider the integral: $\int_{-\infty}^{+\infty}g(x)f^{'}(x)dx$.
By integration by parts (or by Stein's lemma), we have:
\begin{eqnarray}
\int_{-\infty}^{+\infty}g(x)f^{'}(x)dx & = & \int_{-\infty}^{+\infty}g(x)f(x)\beta_{m}(x-\mu)dx\nonumber \\
 & = & \beta_{m}(\mu-\mu)\nonumber \\
\int_{-\infty}^{+\infty}g(x)f^{'}(x)dx & = & 0\label{eq: integral of g * f-prime =00003D 0}
\end{eqnarray}

We now split the integral at $\mu_{LC}$ and $x^{\star}$, assuming
without loss of generality that $x^{\star}\leq\mu_{LC}$:
\begin{eqnarray}
\int_{-\infty}^{x^{\star}}gf^{'}+\int_{x^{\star}}^{\mu_{LC}}gf^{'}+\int_{\mu_{LC}}^{\infty}gf^{'} & = & 0\nonumber \\
\int_{-\infty}^{x^{\star}}gf^{'}+\int_{x^{\star}}^{\mu_{LC}}gf^{'} & = & -\int_{\mu_{LC}}^{\infty}gf^{'}\nonumber \\
 & \geq & 0\label{eq: relationship between the different parts of int g f-prime}
\end{eqnarray}

Now consider a statistic $S_{k}(x)=\left(x-\mu_{LC}\right)^{2k-1}$.
Again using integration by parts, we have the following equality:
\begin{eqnarray}
\int g(x)f^{'}(x)S_{k}(x)dx & = & \int g(x)f(x)\left(\beta_{m}S_{k}(x)(x-\mu)-S_{k}^{'}(x)\right)dx\nonumber \\
\int g(x)f^{'}(x)\left(x-\mu_{LC}\right)^{2k-1}dx & = & \int LC(x)\left(\beta_{m}\left(x-\mu_{LC}\right)^{2k}-\left(2k-1\right)\left(x-\mu_{LC}\right)^{2k-2}\right)\nonumber \\
 & = & \beta_{m}m_{2k}-(2k-1)m_{2k-2}\label{eq: relationship between differents even moments}
\end{eqnarray}

At this point, we only need to prove that $\int gf^{'}S_{k}\leq0$
to finish our proof, from eq. \eqref{eq: relationship between differents even moments}.
We will actually prove a slightly stronger result: that even if we
cut the integral at $\mu_{LC}$, both halves are still negative:
\begin{eqnarray}
\int_{-\infty}^{\mu_{LC}}gf^{'}S_{k} & \leq & 0\label{eq: lower half (difficult)}\\
\int_{\mu_{LC}}^{\infty}gf^{'}S_{k} & \leq & 0\label{eq: upper half (easy)}
\end{eqnarray}

Eq. \eqref{eq: upper half (easy)} is trivial. $g$ is positive everywhere,
while $S_{k}(x)\geq0$ and $f^{'}(x)\leq0$ for $x\geq\mu_{LC}$.

Eq. \eqref{eq: lower half (difficult)} is slightly harder. From eq.
\eqref{eq: relationship between the different parts of int g f-prime},$\int_{-\infty}^{x^{\star}}gf^{'}+\int_{x^{\star}}^{\mu_{LC}}gf^{'}\geq0$,
where the first term is positive, and the second negative. When we
multiply the integrand by the decreasing positive function $-S_{k}(x)=-\left(x-\mu_{LC}\right)^{2k-1}$,
the order in the terms is preserved. To say it in equations:
\begin{eqnarray}
\int_{-\infty}^{x^{\star}}gf^{'}\left(-S_{k}\right) & \geq & \left(-\left(x^{\star}-\mu_{LC}\right)^{2k-1}\right)\int_{-\infty}^{x^{\star}}gf^{'}\nonumber \\
 & \geq & \left(-\left(x^{\star}-\mu_{LC}\right)^{2k-1}\right)\left(-\int_{x^{\star}}^{\mu_{LC}}gf^{'}\right)\nonumber \\
 & \geq & \int_{x^{\star}}^{\mu_{LC}}gf^{'}S_{k}
\end{eqnarray}

from which we finally find eq. \eqref{eq: lower half (difficult)},
which concludes our proof. Note that there is the equality $\int gf^{'}S_{k}=0$
IFF $f^{'}(x)=0$, justifying our earlier comment about $m_{2k}=(2k-1)\beta_{m}^{-1}m_{2k-2}$
IFF $LC(x)=g(x)$.
\end{proof}

\subsection{Extending the Brascamp-Lieb theorem}

The original Brascamp-Lieb result tells us that the spread of $LC(x)$
(as measured by its even moments) can't be too important, but it doesn't
tell us whether such distributions are close to being Gaussian, which
is what EP requires. By constraining the higher derivatives of $\phi(x)$,
we are able to constrain how far $LC$ is from a Gaussian distribution.
This is the essence of our extension of the Brascamp-Lieb theorem.
We derived the following:
\begin{thm}
Extension of the Brascamp-Lieb theorem

With $LC$ a strongly log-concave distribution with slowly changing
log-function (eqs. \eqref{eq:APPENDIX strongly log-concave}, \eqref{eq:APPENDIX slowly changing log}),
we have the following inequalities:

\begin{eqnarray}
\left|\phi^{'}\left(\mu_{LC}\right)\right| & \leq & \frac{K_{3}}{2\beta_{m}}\label{eq: appendix phi-prime bound}\\
\left|\frac{m_{3}}{m_{2}}\right| & \leq & 2\frac{K_{3}}{\beta_{m}^{2}}\label{eq: appendix m3 / m2 bound}\\
\left|\frac{m_{5}}{m_{2}}\right| & \leq & \frac{17K_{3}}{\beta_{m}^{3}}\label{eq: appendix m5/m2 bound}
\end{eqnarray}

which generalizes to:
\begin{equation}
\left|\frac{m_{2k+1}}{m_{2}}\right|\leq C_{k}\frac{K_{3}}{\beta_{m}^{k+1}}
\end{equation}

The following first order expansions of $m_{2}$, $m_{3}$ and $m_{4}$:
\begin{eqnarray}
\left|m_{2}^{-1}-\phi_{2}^{''}(\mu_{LC})\right| & \leq & \frac{K_{3}^{2}}{\beta_{m}^{2}}+\frac{K_{4}}{2\beta_{m}}\label{eq: appendix m_2 INVERSE relationship}\\
\left|\phi^{''}(\mu_{LC})m_{2}-1\right| & \leq & \frac{K_{3}^{2}}{\beta_{m}^{3}}+\frac{K_{4}}{2\beta_{m}^{2}}\label{eq: appendix m2 first order expansion}\\
\left|\phi^{''}(\mu_{LC})m_{3}+\left(\phi^{'}(\mu_{LC})m_{2}+\frac{\phi^{(3)}(\mu_{LC})}{2}m_{4}\right)\right| & \leq & \frac{17}{6}\frac{K_{3}K_{4}}{\beta_{m}^{4}}+\frac{5}{8}\frac{K_{5}}{\beta_{m}^{3}}\label{eq: appendix m3 first order expansion}\\
\left|\phi^{''}(\mu_{LC})m_{4}-3m_{2}\right| & \leq & \frac{19}{2}\frac{K_{3}^{2}}{\beta_{m}^{4}}+\frac{5}{2}\frac{K_{4}}{\beta_{m}^{3}}\label{eq: appendix m4 first order expansion}
\end{eqnarray}

which generalizes to:
\begin{eqnarray}
m_{2k+2} & \approx & \frac{\left(2k+1\right)}{\phi^{''}(\mu_{LC})}m_{2k}\\
 & \approx & \left(2k+1\right)!!\left[\phi^{''}(\mu_{LC})\right]^{-(k+1)}
\end{eqnarray}

And the following higher order relationships:
\begin{eqnarray}
\left|\phi^{'}(\mu_{LC})+\frac{\phi^{(3)}(\mu_{LC})}{2}m_{2}\right| & \leq & \frac{K_{3}K_{4}}{3\beta_{m}^{3}}+\frac{K_{5}}{8\beta_{m}^{2}}\label{eq: appendix second order e relationship}\\
\left|m_{2}^{-1}-\phi^{''}(\mu_{LC})-\frac{\phi^{(3)}(\mu_{LC})}{2}\frac{m_{3}}{m_{2}}-\frac{\phi^{(4)}(\mu_{LC})}{3!}\frac{m_{4}}{m_{2}}\right| & \leq & \frac{17}{24}\frac{K_{3}K_{5}}{\beta_{m}^{3}}+\frac{K_{6}}{8\beta_{m}^{2}}\label{eq: appendix second order v^-1 relationship}
\end{eqnarray}

\end{thm}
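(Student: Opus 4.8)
The plan is to prove everything from a single master identity obtained by integration by parts against the bounding Gaussian $g(x) = \mathcal{N}(x|\mu_{LC},\beta_m^{-1})$, exactly in the style of the proof of the restricted Brascamp--Lieb inequality given above. Writing $f = LC/g$, so that $f$ is log-concave and $\operatorname{sign}(f'(x)) = \operatorname{sign}(x^\star - x)$, the key observation is that for any smooth test function $S$,
\begin{equation}
\int g f' S = \int LC\,\bigl(\beta_m (x-\mu_{LC}) S - S'\bigr),
\end{equation}
and the left-hand side can be sign-controlled whenever $S$ is monotone, by the same split-at-$x^\star$ argument used for $S_k(x) = (x-\mu_{LC})^{2k-1}$. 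Taking $S$ to be successively $1$, $(x-\mu_{LC})$, $(x-\mu_{LC})^2$, $(x-\mu_{LC})^3$ produces linear relations among $\phi'(\mu_{LC})$, the $m_j$, and the $\beta_m$-Gaussian moments; choosing $S$ with a $\phi^{(d)}$ factor baked in (e.g. $S(x) = \phi'(x) - \phi'(\mu_{LC})$, or $S(x)=\phi''(x)(x-\mu_{LC})$) is what injects the higher derivatives $K_3,\dots,K_6$ into the bounds. Throughout, the error terms are controlled by Taylor-expanding $\phi$ or $\phi'$ around $\mu_{LC}$, bounding the remainder by $K_d$ times a moment of $LC$, and then bounding that moment by $(2k-1)!!\beta_m^{-k}$ via \eqref{eq:restricted Brascamp-Lieb}.

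Concretely I would proceed in this order. First, \eqref{eq: appendix phi-prime bound}: apply the identity with $S=1$ to get $\phi'(\mu_{LC}) = \E_{LC}[\phi'(x)-\phi'(\mu_{LC})] $ shifted appropriately; more directly, $0 = \E_{LC}[\phi'(x)]$ is false, so instead write $\phi'(\mu_{LC}) = \phi'(\mu_{LC}) - \E_{LC}[\phi'(x)]$ using the first-order stationarity $\E_{LC}[\phi'(x)]=0$ (itself integration by parts with $S=1$ against $LC$), Taylor-expand $\phi'(x) = \phi'(\mu_{LC}) + \phi''(\mu_{LC})(x-\mu_{LC}) + \tfrac12\phi^{(3)}(\xi)(x-\mu_{LC})^2$, take expectations so the linear term vanishes, and bound the quadratic remainder by $\tfrac12 K_3 m_2 \le \tfrac12 K_3 \beta_m^{-1}$. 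Second, the odd-moment bounds \eqref{eq: appendix m3 / m2 bound}, \eqref{eq: appendix m5/m2 bound}: use the relation $\beta_m m_{2k} - (2k-1)m_{2k-2} = \int g f' S_k \le 0$ together with an analogous identity against $LC$ itself pairing $m_{2k+1}$ with $\phi'$-weighted moments, and feed in \eqref{eq: appendix phi-prime bound} plus Brascamp--Lieb; the constants $17$, $C_k$ come from collecting double-factorial coefficients. Third, the variance expansions \eqref{eq: appendix m_2 INVERSE relationship}--\eqref{eq: appendix m4 first order expansion}: integrate $\phi''(x)(x-\mu_{LC})^{j}$ and related statistics against $LC$, Taylor-expand $\phi''$ to the needed order, and solve the resulting near-triangular linear system for $m_2,m_3,m_4$ in terms of $\phi''(\mu_{LC})$, absorbing everything else into $K_d$-remainders. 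Fourth, the second-order relations \eqref{eq: appendix second order e relationship}, \eqref{eq: appendix second order v^-1 relationship}: same recipe but carrying the Taylor expansions one order further, which is where $K_5,K_6$ enter.

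The main obstacle is not any single identity but bookkeeping: keeping the Taylor remainders at exactly the right order so that each claimed error is genuinely $\O$ of the stated $K$'s over powers of $\beta_m$, and verifying that the numerical constants ($\tfrac{19}{2}$, $\tfrac{17}{6}$, $\tfrac{5}{8}$, etc.) come out as asserted after substituting the Brascamp--Lieb moment bounds. A secondary subtlety is that several of these identities couple moments of different orders (e.g. the $m_3$ expansion involves $m_4$, and $v^{-1}$ involves $m_3$), so the bounds must be derived in a careful sequence — first an a priori crude bound on each quantity from Brascamp--Lieb, then bootstrapped to the sharp first-order form — rather than all at once; one must check the dependency graph is acyclic. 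The sign-control step for $\int g f' S$ requires $S$ monotone on each side of $\mu_{LC}$, which holds for the pure powers but must be re-examined for the $\phi^{(d)}$-weighted test functions, possibly forcing an extra split or an $L^1$ bound in place of a sign argument there.
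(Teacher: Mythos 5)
Your plan matches the paper's proof in essence: the theorem is established exactly from the Stein identities $E_{LC}\left(\phi^{'}(x)S_{k}(x)-S_{k}^{'}(x)\right)=0$ with pure polynomial statistics $S_{k}(x)=\left(x-\mu_{LC}\right)^{k}$, Taylor--Lagrange expansion of $\phi^{'}$ around $\mu_{LC}$, Brascamp--Lieb bounds on the remainder moments, and the acyclic bootstrapping order (crude odd-moment bounds first, then the first- and second-order expansions) that you describe. The only simplification relative to your sketch is that no $\phi^{(d)}$-weighted test functions and no sign-control argument are needed for the extension --- the split-at-$x^{\star}$ argument serves only the basic even-moment recursion, since the Stein relations are exact equalities and the higher derivatives enter solely through the Taylor remainders.
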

Note that we refer to eq. \eqref{eq: appendix m_2 INVERSE relationship},
\eqref{eq: appendix m3 first order expansion} and \eqref{eq: appendix m4 first order expansion}
as first order expansions because you can read them as, respectively:

\begin{eqnarray*}
m_{2} & \approx & \left(\phi^{''}(\mu_{LC})\right)^{-1}\\
m_{3} & \approx & -\left(\phi^{''}(\mu_{LC})\right)^{-1}\left(\phi^{'}(\mu_{LC})m_{2}+\frac{\phi^{(3)}(\mu_{LC})}{2}m_{4}\right)\\
m_{4} & \approx & 3\left(\phi^{''}(\mu_{LC})\right)^{-1}m_{2}
\end{eqnarray*}

These relationships are not exhaustive, and one could find many such
relationships for even higher orders. The list presented here only
concerns results which we will need for our bound on EP.
\begin{proof}
We will first give an outline of the proof, and then dive into all
the equations of the full proof.

The key component of the proof is Stein's lemma (ie: integration by
parts). For $LC=\exp\left(-\phi(x)\right)$, it reads: for any statistic
$S(x)$ with at-most-polynomial growth:
\begin{equation}
E_{LC}\left(\phi^{'}(x)S(x)-S^{'}(x)\right)=0\label{eq: Stein's lemma general form}
\end{equation}

which we will only use for statistics of the form $S_{k}(x)=\left(x-\mu_{LC}\right)^{k}$.
This gives us the following relationships:
\begin{eqnarray}
E_{LC}\left(\phi^{'}(x)\right) & = & 0\label{eq: first stein}\\
E_{LC}\left(\phi^{'}(x)(x-\mu_{LC})\right) & = & 1\label{eq:second stein}\\
E_{LC}\left(\phi^{'}(x)(x-\mu_{LC})^{2}\right) & = & 0\label{eq:third stein}\\
E_{LC}\left(\phi^{'}(x)(x-\mu_{LC})^{3}\right) & = & 3m_{2}\label{eq:fourth stein}
\end{eqnarray}

and further relationships of the same form that we won't need. The
key intuition in understanding why $LC$ is almost Gaussian is the
following: $\phi^{'}(x)\approx\phi^{''}(\mu_{LC})(x-\mu)$. The Stein
relationships for $LC$ are thus almost the same relationships that
would be obeyed by the Gaussian $g_{\mu_{LC}}(x)=\mathcal{N}\left(x|\mu_{LC},\left(\phi^{''}(\mu_{LC})\right)^{-1}\right)$.
This is why $LC$ is close to $g_{\mu_{LC}}$.

For all these relationships, we will perform a Taylor expansion around
$\mu_{LC}$, which now gives us self-consistency relationships between
the different moments of $LC$. For example, just keeping the first
term in eq. \eqref{eq: first stein} gives us eq. \eqref{eq: appendix phi-prime bound}:

\[
\phi^{'}\left(\mu_{LC}\right)\approx0
\]

We need to be careful with how we deal with the remainder of the Taylor
approximation. Using the Taylor-Lagrange formula, we can bound the
error that results from cutting off the Taylor series after some term,
with a term of the form $C\times\left(x-\mu\right)^{k}$ for some
constant C. The expected value under $LC$ of that term can then bounded
from the Brascamp-Lieb theorem. For example, to perform the cut-off
of eq. \eqref{eq: first stein} we just did, we start from the Taylor-Lagrange
expression:
\begin{equation}
\left|\phi^{'}(x)-\phi^{'}(\mu_{LC})-\phi^{''}(\mu_{LC})(x-\mu_{LC})\right|\leq\frac{K_{3}}{2}\left(x-\mu_{LC}\right)^{2}\label{eq: start of phi-prime bound}
\end{equation}

which, when we take the expected value, becomes:
\begin{equation}
\left|E_{LC}\left(\phi^{'}(x)\right)-\phi^{'}(\mu_{LC})\right|\leq\frac{K_{3}}{2}m_{2}\leq\frac{K_{3}}{2\beta_{m}}\label{eq: end of phi-prime bound}
\end{equation}

where we have applied the Brascamp-Lieb theorem. This concludes the
proof of eq. \eqref{eq: appendix phi-prime bound}, and our introduction
to the full proof.

Let's now prove the second relationship of the theorem: eq. \eqref{eq: appendix m3 / m2 bound}.
We start from eq. \eqref{eq:third stein}. We perform the expansion
of $\phi^{'}(x)$ up to the $\phi^{''}(\mu_{LC})(x-\mu_{LC})$ term.
From Taylor-Lagrange, the error is:
\begin{eqnarray}
\left|\phi^{'}(x)-\phi'(\mu_{LC})-\phi^{''}(\mu_{LC})(x-\mu_{LC})\right| & \leq & \frac{K_{3}}{2}\left(x-\mu_{LC}\right)^{2}\nonumber \\
\left|\phi^{'}(x)\left(x-\mu_{LC}\right)^{2}-\phi'(\mu_{LC})\left(x-\mu_{LC}\right)^{2}-\phi^{''}(\mu_{LC})\left(x-\mu_{LC}\right)^{3}\right| & \leq & \frac{K_{3}}{2}\left(x-\mu_{LC}\right)^{4}\label{eq: PIKACHU}
\end{eqnarray}

We now take the expected value:
\begin{equation}
\left|\phi'(\mu_{LC})m_{2}+\phi^{''}(\mu_{LC})m_{3}-E_{LC}\left(\phi^{'}(x)\right)\right|\leq\frac{K_{3}}{2}m_{4}
\end{equation}

Finally, we divide by $m_{2}$, take out the $\phi^{'}(\mu_{LC})$
term from the absolute value, use the bound on $\frac{m_{4}}{m_{2}}$
from eq. \eqref{eq:restricted Brascamp-Lieb RECURRENCE}, and lower
bound $\phi^{''}(\mu_{LC})$:
\begin{eqnarray}
\left|\phi^{''}(\mu_{LC})\frac{m_{3}}{m_{2}}\right| & \leq & \frac{K_{3}}{2}\frac{m_{4}}{m_{2}}+\left|\phi^{'}(\mu_{LC})\right|\nonumber \\
 & \leq & \frac{K_{3}}{2\beta_{m}}(3)+\frac{K_{3}}{2\beta_{m}}\nonumber \\
 & \leq & \frac{2K_{3}}{\beta_{m}}\\
\left|\frac{m_{3}}{m_{2}}\right| & \leq & \frac{2K_{3}}{\beta_{m}^{2}}\label{eq: bound on m3 / m2 APPENDIX}
\end{eqnarray}

which gives us eq. \eqref{eq: appendix m3 / m2 bound}.

Now, let's prove the bound on $m_{5}$ (eq. \eqref{eq: appendix m5/m2 bound}).
The demonstration is quite similar to the $m_{3}$ bound. We start
from another Stein relationship:
\[
E_{LC}\left(\phi^{'}(x)\left(x-\mu_{LC}\right)^{4}\right)=4m_{3}
\]

With the same Taylor-Lagrange expansion as in eq. \eqref{eq: PIKACHU}
and after taking the expected value, we have:
\begin{equation}
\left|4m_{3}-\phi^{'}(\mu_{LC})m_{4}-\phi^{''}(\mu_{LC})m_{5}\right|\leq\frac{K_{3}}{2}m_{6}
\end{equation}

Which we divide by $m_{2}$ and manipulate further:
\begin{eqnarray}
\left|\phi^{''}(\mu_{LC})\frac{m_{5}}{m_{2}}\right| & \leq & 4\left|\frac{m_{3}}{m_{2}}\right|+\left|\phi^{'}(\mu_{LC})\right|\frac{m_{4}}{m_{2}}+\frac{K_{3}}{2}\frac{m_{6}}{m_{2}}\nonumber \\
 & \leq & \frac{8K_{3}}{\beta_{m}^{2}}+\frac{K_{3}}{2\beta_{m}}\frac{3}{\beta_{m}}+\frac{K_{3}}{2}\frac{15}{\beta_{m}^{2}}\nonumber \\
 & \leq & \frac{17K_{3}}{\beta_{m}^{2}}\nonumber \\
\left|\frac{m_{5}}{m_{2}}\right| & \leq & \frac{17K_{3}}{\beta_{m}^{3}}
\end{eqnarray}

which gives us eq. \eqref{eq: appendix m5/m2 bound}.

In order to show that any odd centered moment admits a similar bound
(as we mention it the main text), we proceed by induction. The Stein
relationships:
\[
E\left(\phi^{'}(x)\left(x-\mu_{LC}\right)^{2k}-2k\left(x-\mu_{LC}\right)^{2k-1}\right)=0
\]
give us the inductive step through steps identical to the preceeding
equations, and we have already have the initialization (from eq. \ref{eq: bound on m3 / m2 APPENDIX}).
We can thus find similar bounds for any higher odd moment of $LC(x)$.

Now we will prove the first order expansions, starting with the one
for $m_{2}$ (eq. \eqref{eq: appendix m_2 INVERSE relationship}).
We now start from eq. \eqref{eq:second stein}, which is:
\[
E_{LC}\left(\phi^{'}(x)\left(x-\mu_{LC}\right)\right)=1
\]

First step, the Taylor-Lagrange expansion. We cut off the Taylor series
at $\frac{\phi^{(3)}(\mu_{LC})}{2}\left(x-\mu_{LC}\right)^{2}$. We
can bound the error with:
\begin{equation}
\left|\phi^{'}(x)\left(x-\mu_{LC}\right)-\phi^{'}(\mu_{LC})\left(x-\mu_{LC}\right)-\phi^{''}(\mu_{LC})\left(x-\mu_{LC}\right)^{2}-\frac{\phi^{(3)}(\mu_{LC})}{2}\left(x-\mu_{LC}\right)^{3}\right|\leq\frac{K_{4}}{3!}\left(x-\mu_{LC}\right)^{4}
\end{equation}

which becomes, when we take the expected value:
\begin{eqnarray}
\left|1-0-\phi^{''}(\mu_{LC})m_{2}-\frac{\phi^{(3)}(\mu_{LC})}{2}m_{3}\right| & \leq & \frac{K_{4}}{3!}m_{4}\nonumber \\
\left|\phi^{''}(\mu_{LC})m_{2}-1\right| & \leq & \frac{1}{2}\left|\phi^{(3)}(\mu_{LC})\right|\left|m_{3}\right|+\frac{K_{4}}{3!}m_{4}\nonumber \\
\left|m_{2}^{-1}-\phi^{''}(\mu_{LC})\right| & \leq & \frac{1}{2}\left|\phi^{(3)}(\mu_{LC})\right|\left|\frac{m_{3}}{m_{2}}\right|+\frac{K_{4}}{3!}\frac{m_{4}}{m_{2}}\\
 & \leq & \frac{K_{3}^{2}}{\beta_{m}^{2}}+\frac{K_{4}}{2\beta_{m}}
\end{eqnarray}

which proves eq. \eqref{eq: appendix m_2 INVERSE relationship}, from
which eq. \eqref{eq: appendix m2 first order expansion} is a trivial
consequence.

Now, the $m_{3}$ first order expansion (eq. \eqref{eq: appendix m3 first order expansion}).
We start from the Stein relationship from eq. \eqref{eq:third stein}
(which we already used to prove the bound on $\left|\frac{m_{3}}{m_{2}}\right|$).
\[
E_{LC}\left(\phi^{'}(x)\left(x-\mu_{LC}\right)^{2}\right)=0
\]

The difference between the $m_{3}$ bound and the $m_{3}$ first order
expansion is that we take a higher-order expansion of $\phi^{'}(x)$.
This time, we stop at $\phi^{(4)}(\mu_{LC})\left(x-\mu_{LC}\right)^{3}$.
The Taylor-Lagrange error is bounded by $\frac{K_{5}}{4!}\left(x-\mu_{LC}\right)^{4}$.
This gives us the following bound once we take the expected value.
\begin{equation}
\left|\phi^{'}(\mu_{LC})m_{2}+\phi^{''}(\mu_{LC})m_{3}+\frac{\phi^{(3)}(\mu_{LC})}{2}m_{4}+\frac{\phi^{(4)}(\mu_{LC})}{3!}m_{5}\right|\leq\frac{K_{5}}{4!}m_{6}
\end{equation}

In that equation, $m_{5}$ is an order of magnitude smaller than the
other terms, and we take it out of the absolute value:
\begin{eqnarray}
\left|\phi^{''}(\mu_{LC})m_{3}-\left(-\phi'(\mu_{LC})m_{2}-\frac{\phi^{(3)}(\mu_{LC})}{2}m_{4}\right)\right| & \leq & \frac{\left|\phi^{(4)}(\mu_{LC})\right|}{3!}\left|m_{5}\right|+\frac{K_{5}}{4!}m_{6}\nonumber \\
 & \leq & \frac{K_{4}}{3!}\frac{17K_{3}}{\beta_{m}^{4}}+\frac{K_{5}}{4!}\frac{15}{\beta_{m}^{3}}\nonumber \\
 & \leq & \frac{17}{6}\frac{K_{3}K_{4}}{\beta_{m}^{4}}+\frac{5}{8}\frac{K_{5}}{\beta_{m}^{3}}
\end{eqnarray}

which proves eq. \eqref{eq: appendix m3 first order expansion}.

Finally, we prove the last first order expansion: eq. \eqref{eq: appendix m4 first order expansion}
concerning $m_{4}$. We start from the last Stein relationship: eq.
\eqref{eq:fourth stein}:
\[
E_{LC}\left(\phi^{'}(x)\left(x-\mu_{LC}\right)^{3}\right)=3m_{2}
\]

We cut-off the Taylor series after $\frac{\phi^{(3)}(\mu_{LC})}{2}\left(x-\mu_{LC}\right)^{2}$.
After taking the expected value, the error is:
\begin{equation}
\left|3m_{2}-\phi^{'}(\mu_{LC})m_{3}-\phi^{''}(\mu_{LC})m_{4}-\frac{\phi^{(3)}(\mu_{LC})}{2}m_{5}\right|\leq\frac{K_{4}}{3!}m_{6}
\end{equation}

In this expression, $\phi^{'}(\mu_{LC})m_{3}$ and $\frac{\phi^{(3)}(\mu_{LC})}{2}m_{5}$
are both smaller by an order of magnitude, and we remove them from
the absolute value, to finally obtain:
\begin{eqnarray}
\left|\phi^{''}(\mu_{LC})m_{4}-3m_{2}\right| & \leq & \left|\phi^{'}(\mu_{LC})\right|\left|m_{3}\right|+\left|\frac{\phi^{(3)}(\mu_{LC})}{2}\right|\left|m_{5}\right|+\frac{K_{4}}{3!}m_{6}\nonumber \\
 & \leq & \frac{K_{3}}{2\beta_{m}}\frac{2K_{3}}{\beta_{m}^{3}}+\frac{K_{3}}{2}\frac{17K_{3}}{\beta_{m}^{4}}+\frac{K_{4}}{3!}\frac{15}{\beta_{m}^{3}}\nonumber \\
 & \leq & \frac{19}{2}\frac{K_{3}^{2}}{\beta_{m}^{4}}+\frac{5}{2}\frac{K_{4}}{\beta_{m}^{3}}
\end{eqnarray}

which proves eq. \eqref{eq: appendix m4 first order expansion}.

In order to find the first order developments of higher order even
moments, one proceeds identically to here but from the Stein relationships:
\begin{equation}
E\left(\phi^{'}(x)\left(x-\mu_{LC}\right)^{2k+1}-\left(2k+1\right)\left(x-\mu_{LC}\right)^{2k}\right)
\end{equation}
from which, by the same approach as the proof of eq. \ref{eq: appendix m4 first order expansion},
we have:
\begin{equation}
m_{2k+2}\approx\frac{\left(2k+1\right)}{\phi^{''}(\mu_{LC})}m_{2k}
\end{equation}
and by induction, we prove that:
\begin{equation}
m_{2k+2}\approx\left(2k+1\right)!!\left[\phi^{''}(\mu_{LC})\right]^{-(k+1)}
\end{equation}
which justifies our claim in the main text.

We are only left with proving the final two relationships. For eq.
\eqref{eq: appendix second order e relationship}, this corresponds
to doing a further expansion of the first Stein relationship (eq.
\eqref{eq: first stein}, from which we proved that $\phi^{'}(\mu_{LC})\approx0$):
\[
E_{LC}\left(\phi^{'}(x)\right)=0
\]

We stop the Taylor series after $\frac{\phi^{(4)}(\mu_{LC})}{3!}\left(x-\mu_{LC}\right)^{3}$.
After taking the expected value, we get:
\begin{equation}
\left|\phi^{'}(\mu_{LC})+\frac{\phi^{(3)}(\mu_{LC})}{2}m_{2}+\frac{\phi^{(4)}(\mu_{LC})}{3!}m_{3}\right|\leq\frac{K_{5}}{4!}m_{4}
\end{equation}

We extract the $m_{3}$ term which is an order of magnitude smaller
than the other ones, and obtain:
\begin{eqnarray}
\left|\phi^{'}(\mu_{LC})+\frac{\phi^{(3)}(\mu_{LC})}{2}m_{2}\right| & \leq & \left|\frac{\phi^{(4)}(\mu_{LC})}{3!}\right|\left|m_{3}\right|+\frac{K_{5}}{4!}m_{4}\nonumber \\
 & \leq & \frac{K_{4}}{3!}\frac{2K_{3}}{\beta_{m}^{3}}+\frac{K_{5}}{4!}\frac{3}{\beta_{m}^{2}}\nonumber \\
 & \leq & \frac{K_{3}K_{4}}{3\beta_{m}^{3}}+\frac{K_{5}}{8\beta_{m}^{2}}
\end{eqnarray}

which proves eq. \eqref{eq: appendix second order e relationship}.

At last, we reach the proof of eq. \eqref{eq: appendix second order v^-1 relationship}.
We start from the second Stein relationship (eq. \eqref{eq:second stein},
which we already used to get the first order expansion of $m_{2}$):
\[
E_{LC}\left(\phi^{'}(x)\left(x-\mu_{LC}\right)\right)=1
\]

We stop the Taylor series after $\frac{\phi^{(5)}(\mu_{LC})}{4!}\left(x-\mu_{LC}\right)^{4}$.
After taking the expected value, we get:
\begin{equation}
\left|1-\phi^{''}(\mu_{LC})m_{2}-\frac{\phi^{(3)}(\mu_{LC})}{2}m_{3}-\frac{\phi^{(4)}(\mu_{LC})}{3!}m_{4}-\frac{\phi^{(5)}(\mu_{LC})}{4!}m_{5}\right|\leq\frac{K_{6}}{5!}m_{6}
\end{equation}

We divide by $m_{2}$, then extract the $m_{5}$ term and obtain:
\begin{eqnarray}
\left|m_{2}^{-1}-\phi^{''}(\mu_{LC})+\frac{\phi^{(3)}(\mu_{LC})}{2}\frac{m_{3}}{m_{2}}+\frac{\phi^{(4)}(\mu_{LC})}{3!}\frac{m_{4}}{m_{2}}\right| & \leq & \left|\frac{\phi^{(5)}(\mu_{LC})}{4!}\right|\left|\frac{m_{5}}{m_{2}}\right|+\frac{K_{6}}{5!}\frac{m_{6}}{m_{2}}\nonumber \\
 & \leq & \frac{K_{5}}{4!}\frac{17K_{3}}{\beta_{m}^{3}}+\frac{K_{6}}{5!}\frac{15}{\beta_{m}^{2}}\nonumber \\
 & \leq & \frac{17}{24}\frac{K_{3}K_{5}}{\beta_{m}^{3}}+\frac{K_{6}}{8\beta_{m}^{2}}
\end{eqnarray}

proving eq. \eqref{eq: appendix second order v^-1 relationship} and
concluding our proof.
\end{proof}

\section{Quality of fixed-points of EP}

In this section, we give a detailed proof of our bounds on the quality
of the EP approximation.

We assume that all sites $f_{i}=\exp\left(-\phi_{i}(x)\right)$ are
$\beta_{m}$-strongly log-concave, with slowly changing log-functions.
That is:

\begin{eqnarray}
\forall i,x,\ \phi_{i}^{''}(x) & \geq & \beta_{m}\\
\forall d\in[3,4,5,6]\ \left|\phi_{d}^{(3)}(x)\right| & \leq & K_{d}
\end{eqnarray}

The target distribution $p(x)$ then inherits those properties from
the sites. Noting $\phi_{p}(x)=-\log\left(p(x)\right)=\sum_{i}\phi_{i}(x)$,
then $\phi_{p}$ is $n\beta_{m}$-strongly log-concave and for $d\in[3,4,5,6]$,
\begin{equation}
\left|\phi_{p}^{(d)}(x)\right|\leq nK_{d}
\end{equation}

Let $q_{i}\left(x|r_{i},\beta_{i}\right)$ be the site-approximations
of a fixed-point of EP, $q\left(x|r=\sum_{i}r_{i},\beta=\sum_{i}\beta_{i}\right)$
be the corresponding approximation of $p(x)$ and $h_{i}(x)$ the
corresponding hybrid distributions. From our hypothesis on the sites,
all hybrids are $\left(\beta_{m}+\beta_{-i}\right)$-strongly log-concave,
with slowly varying log-function (with constants $K_{d}$). We can
thus apply our results from section \ref{sec:Improving-on-the Brascamp-Lieb bound}
to all hybrids and the target distribution.

Some results to keep in mind on the hybrids: first of all, 
\begin{equation}
-\frac{\partial\log\left(h_{i}(x)\right)}{\partial x}=\phi_{i}^{'}(x)+\beta_{-i}x-r_{-i}\label{eq: phi_h prime}
\end{equation}
This expression is important as it is the one that appears in the
Stein relationships.

Also, because $q(x)$ is a Gaussian distribution of mean and variance
$\mu_{EP},v_{EP}$ and with natural parameters $r,\beta$:
\begin{eqnarray}
r & = & \beta\mu_{EP}\\
\beta & = & v_{EP}^{-1}
\end{eqnarray}

Finally, we have:
\begin{eqnarray}
\sum_{i}\beta_{-i}\mu_{EP} & = & \sum_{i,j\neq i}\beta_{j}\mu_{EP}\nonumber \\
 & = & \left(n-1\right)\sum_{j}\beta_{j}\mu_{EP}\nonumber \\
 & = & \left(n-1\right)\beta\mu_{EP}\nonumber \\
 & = & \left(n-1\right)r\nonumber \\
 & = & \left(n-1\right)\sum_{j}r_{j}\nonumber \\
 & = & \sum_{i,j\neq i}r_{j}\nonumber \\
\sum\beta_{-i}\mu_{EP} & = & \sum_{i}r_{-i}\label{eq: appendix sum of beta_-i vs sum of r_-i}
\end{eqnarray}

\subsection{Lower-bounding the $\beta_{i}$}

Let's show that we can lower bound the $\beta_{i}$ at the fixed-point
by $\beta_{m}$.

Recall that $\beta_{i}$ is obtained from the difference between the
inverse variance of $h_{i}(x)$ and $\beta_{-i}$, and $h_{i}(x)$
happens to be a $\left(\beta_{m}+\beta_{-i}\right)$-strongly log-concave
distribution. We can thus apply the Brascamp-Lieb inequality to the
variance:
\begin{eqnarray}
m_{2}^{i} & \leq & \frac{1}{\beta_{m}+\beta_{-i}}\\
\left(m_{2}^{i}\right)^{-1} & \geq & \beta_{m}+\beta_{-i}
\end{eqnarray}

Thus, $\beta_{i}=\left(m_{2}^{i}\right)^{-1}-\beta_{-i}\geq\beta_{m}$
and we have the claimed lower bound.%
\footnote{By the the same logic, if all sites are strongly log-concave, the
dynamics of EP must always maintain $\beta_{i}\geq\beta_{m}$. It
is thus useless to initialize the EP algorithm at a lower value.%
} 

Thus all hybrids are actually at least $n\beta_{m}$-strongly log-concave
(but could theoretically be stronger. This is one way our bounds can
be pessimistic).

\subsection{Approximation of various moments by $q(x)$ and the hybrids}

In this section, we will show that some moments of $p(x)$ are matched
approximately by the moments of $q(x)$ and/or the moments of the
hybrids $h_{i}(x)$.

We will note $m_{k}^{p}$ the $k^{th}$ centered moment of $p(x)$
and $m_{k}^{i}$ the moments of the hybrids. We will use $\mu,v$
for the mean and variance of $p(x)$ and $\mu_{EP},v_{EP}$ for the
mean and variance of $q(x)$ and all $h_{i}(x)$ (recall that, at
a fixed-point of EP, $q(x)$ and all $h_{i}(x)$ share the same mean
and variance). The mean and variance have gained special notation
due to their special status.

With these notations, the first three even moments of $q$ are respectively
$v_{EP}$, $3v_{EP}^{2}$ and $15v_{EP}^{3}$, while all odd moments
are $0$.

We will show that the following moments are matched:
\begin{thm}
When all sites are strongly log-concave with slowly changing log,
fixed-points of EP provide a good approximation of several moments
of $p(x)$:

\begin{eqnarray*}
\mu & = & \mu_{EP}+\O\left(n^{-2}\right)\\
v^{-1} & = & v_{EP}^{-1}+\O\left(1\right)\\
m_{3}^{p} & = & \sum_{i}m_{3}^{i}+\O\left(n^{-3}\right)\\
m_{4}^{p} & = & 3v_{EP}^{2}+\O\left(n^{-3}\right)\\
\forall i\ m_{4}^{p} & = & m_{4}^{i}+\O\left(n^{-3}\right)
\end{eqnarray*}
\end{thm}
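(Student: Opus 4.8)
The plan is to exploit the fixed-point characterisation (all hybrids $h_i$ and the global approximation $q$ share mean $\mu_{EP}$ and variance $v_{EP}$) together with the extended Brascamp--Lieb bounds applied both to the target $p$ and to every hybrid, which are all at least $n\beta_m$-strongly log-concave with the same $K_d$'s. The key reduction is that a Stein/Taylor identity for $p$ must be mirrored by the \emph{sum} over $i$ of the analogous identities for the hybrids, because summing eq.~\eqref{eq: phi_h prime} over $i$ gives $\sum_i\bigl(\phi_i'(x)+\beta_{-i}x-r_{-i}\bigr)=\phi_p'(x)+(n-1)(\beta x-r)$, and the offending extra term vanishes at $x=\mu_{EP}$ by eq.~\eqref{eq: appendix sum of beta_-i vs sum of r_-i}. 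So each claimed moment-matching statement will come from subtracting the ``$p$-version'' of one of the appendix relations from the summed ``hybrid-version'' and bounding the difference.

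First I would nail down the mean: apply eq.~\eqref{eq: appendix phi-prime bound} to $p$ and to each $h_i$, sum the hybrid bounds, and use the cancellation above to deduce $|\phi_p'(\mu_{EP})|\le \tfrac{K_3}{2\beta_m}$ (this is eq.~\eqref{eq: equivalent to earlier equation}), hence $|\mu-\mu_{EP}|=\O(n^{-1})$ via the $n\beta_m$ strong convexity of $\phi_p$. Then bootstrap: apply the \emph{second-order} relation eq.~\eqref{eq: appendix second order e relationship} to $p$ (giving $\phi_p'(\mu)+\tfrac12\phi_p^{(3)}(\mu)v=\O(n^{-1})$) and the summed version to the hybrids at $\mu_{EP}$; here I must convert $\sum_i \tfrac12\phi_i^{(3)}(\mu_{EP})m_2^i$ into $\tfrac12\phi_p^{(3)}(\mu_{EP})v_{EP}$ up to $\O(n^{-1})$, which needs $m_2^i=v_{EP}$ exactly (true at the fixed point) --- that is the clean step. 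Subtracting, Taylor-expanding $\phi_p'$ around $\mu_{EP}$, and using $v-v_{EP}=\O(n^{-2})$ (a restatement of the variance bound) yields $\mu-\mu_{EP}=\O(n^{-2})$.

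For $v^{-1}$ I would apply eq.~\eqref{eq: appendix m_2 INVERSE relationship} to $p$ and, separately, recall that $\beta=v_{EP}^{-1}$ equals the inverse variance of each hybrid, so eq.~\eqref{eq: appendix m_2 INVERSE relationship} applied to $h_i$ gives $|v_{EP}^{-1}-\phi_i''(\mu_{EP})-\beta_{-i}|\le \tfrac{K_3^2}{\beta_m^2}+\tfrac{K_4}{2\beta_m}$ (the hybrid's ``$\phi''$'' is $\phi_i''+\beta_{-i}$); combining with $\sum_i\beta_{-i}=(n-1)\beta$ and summing is the wrong move dimensionally, so instead I use any single hybrid plus the identity $\beta=\sum_j\beta_j$ is not needed --- rather I compare $\phi_p''(\mu)=\sum_i\phi_i''(\mu)$ to $\phi_p''(\mu_{EP})$ (difference $\O(n^{-1})$ since $|\mu-\mu_{EP}|=\O(n^{-2})$ and $\phi_p^{(3)}=\O(n)$) and to $v^{-1}$ and $v_{EP}^{-1}$ through the two instances of eq.~\eqref{eq: appendix m_2 INVERSE relationship}, getting $|v^{-1}-v_{EP}^{-1}|=\O(1)$. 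The third- and fourth-moment statements follow the same template: eq.~\eqref{eq: appendix m4 first order expansion} for $p$ and summed over hybrids gives $m_4^p$ and each $m_4^i$ equal to $3v^2$ resp. $3v_{EP}^2$ up to $\O(n^{-3})$ (note $(\phi_p'')^{-1}$-type prefactors are $\O(n^{-1})$, squared gives $\O(n^{-2})$, and the bound's numerator is $\O(n)$, so the product is $\O(n^{-3})$ --- I must track these powers carefully); and eq.~\eqref{eq: appendix m3 first order expansion} summed over hybrids, minus its $p$-version, gives $m_3^p=\sum_i m_3^i+\O(n^{-3})$ after the $\phi'(\mu_{EP})m_2$ and $\tfrac12\phi^{(3)}(\mu_{EP})m_4$ pieces telescope against the corresponding $p$-terms.

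The main obstacle is bookkeeping the orders of magnitude rather than any conceptual difficulty: one has to keep straight that $\phi_p^{(d)}=\O(n)$, $v,v_{EP}=\O(n^{-1})$, centred moments $m_k=\O(n^{-k/2})$ for even $k$ and $\O(n^{-(k+1)/2})$ for odd $k$ (from Brascamp--Lieb with constant $n\beta_m$), and that the appendix error bounds, when specialised to the hybrids, carry a factor that is $\O(1)$ in $\beta_m$ but must be re-expressed using $n\beta_m$ to get the stated decay; the genuinely delicate point is the second-order mean argument, where an $\O(n^{-1})$-looking discrepancy must be shown to actually be $\O(n^{-2})$ by cancelling the leading term using $v-v_{EP}=\O(n^{-2})$ and the shared-variance identity, and getting that cancellation to hold term-by-term across the sum over $i$ is where I expect to spend the most care.
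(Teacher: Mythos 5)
Your overall strategy is the paper's own: apply the extended Brascamp--Lieb/Stein relations to $p$ and to every hybrid (each hybrid being at least $n\beta_m$-strongly log-concave with the same $K_d$'s), sum over hybrids so that the cavity terms $\beta_{-i}\mu_{EP}-r_{-i}$ cancel via \eqref{eq: appendix sum of beta_-i vs sum of r_-i}, observe that $(\mu,v)$ and $(\mu_{EP},v_{EP})$ then satisfy nearly identical relations, and bootstrap from the rough $\O\left(n^{-1}\right)$ mean bound. Your mean, $m_{3}$ and $m_{4}$ arguments match the paper's proof in substance; the paper additionally makes explicit the replacement of the prefactor $\phi_{i}^{''}(\mu_{EP})+\beta_{-i}$ by $v_{EP}^{-1}$ (eq. \eqref{eq: phi''_i + beta_-i is almost constant}) before summing the third-moment relations, and your ``telescope'' step needs exactly that ingredient, which you do have available.

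The genuine problem is your $v^{-1}$ paragraph. You declare that summing the per-hybrid relations is ``the wrong move dimensionally'' and propose to use a single hybrid instead; but a single hybrid only yields that $v_{EP}^{-1}-\phi_{i}^{''}(\mu_{EP})-\beta_{-i}$ is small, and $\beta_{-i}$ cannot be related to $\sum_{j\neq i}\phi_{j}^{''}(\mu_{EP})$ (hence to $\phi_{p}^{''}(\mu_{EP})$) without invoking the other $n-1$ hybrid relations --- i.e.\ without summing. Summing is in fact exactly the right move and is dimensionally harmless: $\sum_{i}\left(v_{EP}^{-1}-\phi_{i}^{''}(\mu_{EP})-\beta_{-i}\right)=nv_{EP}^{-1}-\phi_{p}^{''}(\mu_{EP})-(n-1)\beta=v_{EP}^{-1}-\phi_{p}^{''}(\mu_{EP})$ since $\beta=v_{EP}^{-1}$, and the accumulated error is $n\times\O\left(n^{-1}\right)=\O\left(1\right)$, \emph{provided} you use the per-hybrid error with the $n\beta_m$ curvature, i.e.\ $n^{-2}K_{3}^{2}/\beta_{m}^{2}+n^{-1}K_{4}/(2\beta_{m})$ as in eqs. \eqref{eq: first order of m_2 inv for each hybrid}--\eqref{eq: first order of m_2 INV for EP}; the $\O\left(1\right)$ per-hybrid constant you wrote would make any aggregation over $n$ hybrids useless. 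Two smaller points: in that same paragraph you invoke $\left|\mu-\mu_{EP}\right|=\O\left(n^{-2}\right)$, which at that stage is circular (your refined mean bound used $v-v_{EP}=\O\left(n^{-2}\right)$, which needs the variance bound); the rough bound $\left|\mu-\mu_{EP}\right|=\O\left(n^{-1}\right)$ from your first step suffices, since $\phi_{p}^{(3)}=\O\left(n\right)$ then gives $\left|\phi_{p}^{''}(\mu)-\phi_{p}^{''}(\mu_{EP})\right|=\O\left(1\right)$ --- this is precisely how the paper cuts the loop. Also note $\left|\phi_{p}^{''}(\mu)-\phi_{p}^{''}(\mu_{EP})\right|\leq nK_{3}\left|\mu-\mu_{EP}\right|$, not $K_{3}\left|\mu-\mu_{EP}\right|$: keep the factor $n$ in the bookkeeping you rightly identify as the main hazard.
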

\begin{proof}
Let's first give an outline of the proof.

The logic for all these results is similar. Because all hybrids $h_{i}(x)$
are $n\beta_{m}$-strongly log-concave with slowly changing-log, we
can apply the results of section \ref{sec:Improving-on-the Brascamp-Lieb bound}
on all those distributions, and obtain inequalities that relate the
moments of the $h_{i}(x)$ to one another. Since they all share the
same mean and variance, these become severely constrained. Since $p(x)$
is also log-concave with slowly changing log-function, its mean and
variance obey very similar relationships to $\mu_{EP}$ and $v_{EP}$.
From the fact that the pair $\left(\mu,v\right)$ and the pair $\left(\mu_{EP},v_{EP}\right)$
obey almost the same inequalities, we are able to deduce that they
are close to one another.

Let's start with $\mu$. From eq. \eqref{eq: appendix second order e relationship},
$\mu$ obeys the following simple relationship:
\begin{eqnarray}
\left|\phi_{p}^{'}(\mu)+\frac{\phi_{p}^{(3)}(\mu)}{2}v\right| & \leq & \frac{nK_{3}nK_{4}}{3n^{3}\beta_{m}^{3}}+\frac{nK_{5}}{8n^{2}\beta_{m}^{2}}\nonumber \\
 & \leq & n^{-1}\left(\frac{K_{3}K_{4}}{3\beta_{m}^{3}}+\frac{K_{5}}{8\beta_{m}^{2}}\right)\label{eq: second order relationship for e_p}
\end{eqnarray}

Applying the same results to all hybrids $h_{i}(x)$, we get:

\begin{eqnarray}
\forall i\ \left|\phi_{i}^{'}(\mu_{EP})+\beta_{-i}\mu_{EP}-r_{-i}+\frac{\phi_{i}^{(3)}(\mu_{EP})}{2}v_{EP}\right| & \leq & \frac{K_{3}K_{4}}{3n^{3}\beta_{m}^{3}}+\frac{K_{5}}{8n^{2}\beta_{m}^{2}}\nonumber \\
 & \leq & n^{-3}\frac{K_{3}K_{4}}{3\beta_{m}^{3}}+n^{-2}\frac{K_{5}}{8\beta_{m}^{2}}\label{eq: second order relationshiop for e_EP hybrid by hybrid}
\end{eqnarray}

which is slightly different than eq. \eqref{eq: second order relationship for e_p}.
Let's now sum the relationship obtained for each $h_{i}(x)$. The
$\beta_{-i}\mu_{EP}-r_{-i}$ terms drop out (eq. \eqref{eq: appendix sum of beta_-i vs sum of r_-i})
and we get:
\begin{equation}
\left|\phi_{p}^{'}(\mu_{EP})+\frac{\phi_{p}^{(3)}(\mu_{EP})}{2}v_{EP}\right|\leq n^{-2}\frac{K_{3}K_{4}}{3\beta_{m}^{3}}+n^{-1}\frac{K_{5}}{8\beta_{m}^{2}}\label{eq: final second order relationship for e_EP}
\end{equation}

We have that $\mu$ and $\mu_{EP}$ satisfy almost the same relationship
from eq. \eqref{eq: second order relationship for e_p} and \eqref{eq: final second order relationship for e_EP}.
We can use this to bound the distance between the two, as a function
of the distance between $v$ and $v_{EP}$:

\begin{eqnarray}
\phi_{p}^{'}(\mu_{EP})+\frac{\phi_{p}^{(3)}(\mu_{EP})}{2}v_{EP} & = & \phi_{p}^{'}(\mu_{EP})+\frac{\phi_{p}^{(3)}(\mu_{EP})}{2}(v+v_{EP}-v)\nonumber \\
\phi_{p}^{'}(\mu_{EP})+\frac{\phi_{p}^{(3)}(\mu_{EP})}{2}v_{EP}-\left(\phi_{p}^{'}(\mu)+\frac{\phi_{p}^{(3)}(\mu)}{2}v\right) & = & \left[\phi_{p}^{''}(\xi_{1})+\frac{\phi_{p}^{(4)}(\xi_{2})}{2}v\right](\mu_{EP}-\mu)\nonumber \\
 &  & \ \ +\frac{\phi_{p}^{(3)}(\mu_{EP})}{2}\left(v_{EP}-v\right)\label{eq: almost there !}
\end{eqnarray}

where $\xi_{1},\xi_{2}\in\left[\mu,\mu_{EP}\right]$ and we have used
first-order expansions at $\mu$ of $\phi_{p}^{'}(\mu_{EP})$ and
$\phi_{p}^{(3)}(\mu_{EP})$. We can go from upper bounding $\left[\phi_{p}^{''}(\xi_{1})+\frac{\phi_{p}^{(4)}(\xi_{2})}{2}v\right](\mu_{EP}-\mu)$
to upper bounding $\left|\mu-\mu_{EP}\right|$:
\begin{eqnarray}
\left|\left[\phi_{p}^{''}(\xi_{1})+\frac{\phi_{p}^{(4)}(\xi_{2})}{2}v\right](\mu_{EP}-\mu)\right| & \geq & \underset{\xi_{1},\xi_{2}}{\mbox{min}}\left(\left[\phi_{p}^{''}(\xi_{1})+\frac{\phi_{p}^{(4)}(\xi_{2})}{2}v\right]\right)\left|\mu-\mu_{EP}\right|\nonumber \\
 & \geq & \left[n\beta_{m}-\frac{K_{4}}{2\beta_{m}}\right]\left|\mu-\mu_{EP}\right|\label{eq:i need a name here}
\end{eqnarray}

We finally obtain a bound on the distance between $\mu$ and $\mu_{EP}$
by combining eqs. \eqref{eq: second order relationship for e_p},
\eqref{eq: final second order relationship for e_EP}, \eqref{eq: almost there !}
and \eqref{eq:i need a name here}:
\begin{eqnarray}
\left|\left[\phi_{p}^{''}(\xi_{1})+\frac{\phi_{p}^{(4)}(\xi_{2})}{2}v\right](\mu_{EP}-\mu)\right| & \leq & \left(n^{-1}+n^{-2}\right)\frac{K_{3}K_{4}}{3\beta_{m}^{3}}+2n^{-1}\frac{K_{5}}{8\beta_{m}^{2}}+n\frac{K_{3}}{2}\left|v-v_{EP}\right|\nonumber \\
 & \leq & \O\left(n^{-1}\right)+\O\left(n^{-1}\right)+\O\left(n\left|v-v_{EP}\right|\right)\\
\left|\mu-\mu_{EP}\right| & \leq & \O\left(n^{-2}\right)+\O\left(\left|v-v_{EP}\right|\right)\label{eq: final eq for e approx e_EP}
\end{eqnarray}

Once we show that $v=v_{EP}+\O\left(n^{-2}\right)$, eq. \eqref{eq: final eq for e approx e_EP}
will give us indeed that $\mu=\mu_{EP}+\O\left(n^{-2}\right)$.

Let's now show that $v\approx v_{EP}$. We start from the first order
expansion of $m_{2}^{-1}$ from our extension of the Brascamp-Lieb
theorem (eq. \eqref{eq: appendix m_2 INVERSE relationship}). For
$p(x)$, this gives us:
\begin{eqnarray}
\left|v^{-1}-\phi_{p}^{''}(\mu)\right| & \leq & \frac{n^{2}K_{3}^{2}}{n^{2}\beta_{m}^{2}}+\frac{nK_{4}}{2n\beta_{m}}\nonumber \\
 & \leq & \frac{K_{3}^{2}}{\beta_{m}^{2}}+\frac{K_{4}}{2\beta_{m}}\label{eq: first order of m_2 INV for p}
\end{eqnarray}

Again the corresponding relationship for the hybrids is not exactly
what we want it to be:
\begin{eqnarray}
\forall i\ \left|v_{EP}^{-1}-\phi_{i}^{''}(\mu_{EP})-\beta_{-i}\right| & \leq & \frac{K_{3}^{2}}{n^{2}\beta_{m}^{2}}+\frac{K_{4}}{2n\beta_{m}}\nonumber \\
 & \leq & n^{-2}\frac{K_{3}^{2}}{\beta_{m}^{2}}+n^{-1}\frac{K_{4}}{2\beta_{m}}\label{eq: first order of m_2 inv for each hybrid}
\end{eqnarray}

But again, we sum all those relationships:
\begin{equation}
\left|nv_{EP}^{-1}-\phi_{p}^{''}(\mu_{EP})-(n-1)\beta\right|\leq n^{-1}\frac{K_{3}^{2}}{\beta_{m}^{2}}+\frac{K_{4}}{2\beta_{m}}\label{eq: WABAGUNGA}
\end{equation}
which further simplifies, because $\beta=v_{EP}^{-1}$, into:
\begin{equation}
\left|v_{EP}^{-1}-\phi_{p}^{''}(\mu_{EP})\right|\leq n^{-1}\frac{K_{3}^{2}}{\beta_{m}^{2}}+\frac{K_{4}}{2\beta_{m}}\label{eq: first order of m_2 INV for EP}
\end{equation}

Again, we find that the pairs $\left(\mu,v\right)$ and $\left(\mu_{EP},v_{EP}\right)$
obey very similar relationships: eqs. \eqref{eq: first order of m_2 INV for p}
and \eqref{eq: first order of m_2 INV for EP}. We have:

\begin{equation}
\left|\phi_{p}^{''}(\mu)-\phi_{p}^{''}(\mu_{EP})\right|\leq K_{3}\left|\mu-\mu_{EP}\right|
\end{equation}

and this gives us that $v^{-1}\approx v_{EP}^{-1}$:
\begin{eqnarray}
\left|v^{-1}-v_{EP}^{-1}\right| & \leq & K_{3}\left|\mu-\mu_{EP}\right|+\left(1+n^{-1}\right)\frac{K_{3}^{2}}{\beta_{m}^{2}}+2\frac{K_{4}}{2\beta_{m}}\nonumber \\
\left|v^{-1}-v_{EP}^{-1}\right| & \leq & \O\left(1\right)+\O\left(n\left|\mu-\mu_{EP}\right|\right)\label{eq: final eq for v approx v_EP}
\end{eqnarray}
 Our final equations for the size of $\left|\mu-\mu_{EP}\right|$
and$\left|v^{-1}-v_{EP}^{-1}\right|$ seem to be caught in a loop:
you need to know how good one approximation is in order to know how
good the second will be and so on. This is not at all the case and
it is very easy to cut this loop. 

The easiest way is to remark that both $\mu$ and $\mu_{EP}$ are
$\O\left(n^{-1}\right)$ away from the mode of $p$ and so they must
be $\O\left(n^{-1}\right)$ from one another (see main text, section
\ref{sub:Computing-bounds-on EP }). This gives $v^{-1}=v_{EP}^{-1}+\O\left(1\right)$
(from eq. \eqref{eq: final eq for v approx v_EP}.

Then, we remark that both $v^{-1}$ and $v_{EP}^{-1}$ are order $n$.
The error for $\left|v-v_{EP}\right|$ is then of order $n^{-2}$
and we have that $\mu=\mu_{EP}+\O\left(n^{-2}\right)$, from eq. \eqref{eq: final second order relationship for e_EP}.
This concludes the first part of our proof.

Let's now look at the fourth moment of the target $m_{4}^{p}$. We
will show that is matched to by the fourth moment $m_{4}^{i}$ of
any hybrid and by the fourth moment of the Gaussian approximation
of $p(x)$: $3v_{EP}^{2}$.

From our Brascamp-Lieb extension, the first order approximation of
$m_{4}^{p}$ is:
\begin{equation}
\left|\phi^{''}(\mu)m_{4}^{p}-3v\right|\leq n^{-2}\left(\frac{19}{2}\frac{K_{3}^{2}}{\beta_{m}^{4}}+\frac{5}{2}\frac{K_{4}}{\beta_{m}^{3}}\right)\label{eq: first order for m_4_p}
\end{equation}

From which, intuitively: $m_{4}^{p}\approx3v\left(\phi_{p}^{''}(\mu)\right)^{-1}\approx3v^{2}\approx3v_{EP}^{2}$

Let's now formalize this intuition by bounding explicitely each error
term:
\begin{eqnarray}
3v^{2}-3v_{EP}^{2} & = & 6\left(v-v_{EP}\right)\frac{\left(v+v_{EP}\right)}{2}\\
\left|3v^{2}-3v_{EP}^{2}\right| & \leq & 6\left|v-v_{EP}\right|\frac{1}{2n\beta_{m}}\\
3v\left(\phi_{p}^{''}(\mu)\right)^{-1}-3v^{2} & = & 3v\left[\left(\phi_{p}^{''}(\mu)\right)^{-1}-v\right]\\
\left|3v\left(\phi_{p}^{''}(\mu)\right)^{-1}-3v^{2}\right| & \leq & \left|\left(\phi_{p}^{''}(\mu)\right)^{-1}-v\right|\frac{3}{n\beta_{m}}
\end{eqnarray}

Which we can bound using preceding relationships (eq. \eqref{eq: final eq for v approx v_EP}
and eq. \eqref{eq: appendix m2 first order expansion}), and which
gives us the final bound:
\begin{eqnarray}
\left|m_{4}^{p}-3v_{EP}^{2}\right| & \leq & n^{-3}\left(\frac{19}{2}\frac{K_{3}^{2}}{\beta_{m}^{5}}+\frac{5}{2}\frac{K_{4}}{\beta_{m}^{4}}\right)+\frac{6}{n\beta_{m}}\left|v-v_{EP}\right|+\frac{3}{n\beta_{m}}\frac{1}{n^{2}\beta_{m}^{2}}\left[\frac{2K_{3}^{2}}{\beta_{m}^{2}}+\frac{K_{4}}{2\beta_{m}}\right]\nonumber \\
 & \leq & \O\left(n^{-3}\right)
\end{eqnarray}

Let's note that this final approximation isn't any better of any worse,
in terms of orders of magnitude, than the original approximation $m_{4}^{p}\approx3v\left(\phi_{p}^{''}(\mu)\right)^{-1}$.

Another approximation that is of similar quality, in terms of orders
of magnitude, is for any hybrid $i$: $m_{4}^{p}\approx m_{4}^{i}$.
Indeed, from \ref{eq: appendix m4 first order expansion} (Brascamp-Lieb
extension: $m_{4}$ first order approximation), we have that:
\begin{equation}
\left|\left[\phi_{i}^{''}(\mu_{EP})+\beta_{-i}\right]m_{4}^{i}-3v_{EP}\right|\leq n^{-4}\frac{19}{2}\frac{K_{3}^{2}}{\beta_{m}^{4}}+n^{-3}\frac{5}{2}\frac{K_{4}}{\beta_{m}^{3}}\label{eq: first order for m_4 HYBRID}
\end{equation}

and see that $m_{4}^{i}$ would obey a similar relationship to $m_{4}^{p}$
(eq. \eqref{eq: first order for m_4_p}) if $\beta_{-i}\approx\sum_{j\neq i}\phi_{j}^{''}(\mu_{EP})$.
That happens to be the case because we also have:
\begin{equation}
\left|v_{EP}^{-1}-\left[\phi_{i}^{''}(\mu_{EP})+\beta_{-i}\right]\right|\leq n^{-2}\frac{2K_{3}^{2}}{\beta_{m}^{2}}+n^{-1}\frac{K_{4}}{2\beta_{m}}\label{eq: phi''_i + beta_-i is almost constant}
\end{equation}

Thus, $\phi_{i}^{''}(\mu_{EP})+\beta_{-i}$ is approximately constant
(in $i$), and approximately equal to $v_{EP}^{-1}$, which is an
important result in its own right. If we combine eqs. \eqref{eq: first order for m_4 HYBRID}
and \eqref{eq: phi''_i + beta_-i is almost constant}, we thus have:
\begin{eqnarray}
m_{4}^{i} & = & 3v_{EP}\left[\phi_{i}^{''}(\mu_{EP})+\beta_{-i}\right]^{-1}+\O\left(n^{-4}\right)\nonumber \\
 & = & 3v_{EP}^{2}+\O\left(n^{-4}\right)\nonumber \\
 & = & m_{4}^{p}+\O\left(n^{-3}\right)\label{eq: final M_4 hybrid approx m_4_P}
\end{eqnarray}

which concludes our proof that all fourth moments of the hybrids and
$q$ and $p$ are approximately equal. Note that an absolute error
of order $n^{-3}$ translates into a relative error of order $n^{-1}$.

Let's now show how to approximate the third moment of the target $m_{3}^{p}$
from the third moments of the hybrids $m_{3}^{i}$. We start for the
first-order approximation of $m_{3}^{p}$ (Brascamp-Lieb extension,
eq. \eqref{eq: appendix m3 first order expansion}):
\begin{eqnarray}
\left|\phi_{p}^{''}(\mu)m_{3}^{p}+\left(\phi^{'}(\mu)v+\frac{\phi^{(3)}(\mu)}{2}m_{4}^{p}\right)\right| & \leq & n^{-2}\left(\frac{17}{6}\frac{K_{3}K_{4}}{\beta_{m}^{4}}+\frac{5}{8}\frac{K_{5}}{\beta_{m}^{3}}\right)\label{eq: first order of m_3 for P}\\
m_{3}^{p} & \approx & -\left(\phi_{p}^{''}(\mu)\right)^{-1}\left(\phi^{'}(\mu)v+\frac{\phi^{(3)}(\mu)}{2}m_{4}^{p}\right)
\end{eqnarray}

For the hybrids, we have:
\begin{equation}
\forall i\ \left|\left(\phi_{i}^{''}(\mu_{EP})+\beta_{-i}\right)m_{3}^{i}+\left(\left(\phi_{i}^{'}(\mu_{EP})+\beta_{-i}\mu_{EP}-r_{-i}\right)v_{EP}+\frac{\phi_{i}^{(3)}(\mu_{EP})}{2}m_{4}^{i}\right)\right|\leq n^{-4}\frac{17}{6}\frac{K_{3}K_{4}}{\beta_{m}^{4}}+n^{-3}\frac{5}{8}\frac{K_{5}}{\beta_{m}^{3}}\label{eq: first order of m_3 for hybrids}
\end{equation}

We will perform the following steps:
\begin{eqnarray}
m_{3}^{i} & \approx & \left(\phi_{i}^{''}(\mu_{EP})+\beta_{-i}\right)^{-1}\left(\left(\phi_{i}^{'}(\mu_{EP})+\beta_{-i}\mu_{EP}-r_{-i}\right)v_{EP}+\frac{\phi_{i}^{(3)}(\mu_{EP})}{2}m_{4}^{i}\right)\\
 & \approx & v_{EP}\left(\left(\phi_{i}^{'}(\mu_{EP})+\beta_{-i}\mu_{EP}-r_{-i}\right)v_{EP}+\frac{\phi_{i}^{(3)}(\mu_{EP})}{2}m_{4}^{p}\right)
\end{eqnarray}

From which:
\begin{eqnarray}
\sum_{i}m_{3}^{i} & \approx & v_{EP}\sum\left(\left(\phi_{i}^{'}(\mu_{EP})+\beta_{-i}\mu_{EP}-r_{-i}\right)v_{EP}+\frac{\phi_{i}^{(3)}(\mu_{EP})}{2}m_{4}^{p}\right)\\
 & \approx & v_{EP}\left(\left(\phi_{p}^{'}(\mu_{EP})+0\right)v_{EP}+\frac{\phi_{p}^{(3)}(\mu_{EP})}{2}m_{4}^{p}\right)\label{eq: almost the same relatoinship !!!! bazinga}
\end{eqnarray}

from which we see that $m_{3}^{p}$ and $\sum_{i}m_{3}^{i}$ obey
very similar relationships (eq. \eqref{eq: first order of m_3 for P}
and eq. \eqref{eq: almost the same relatoinship !!!! bazinga}), and
can conclude that they are close.

More formally, starting from eq. \eqref{eq: first order of m_3 for hybrids},
let's replace $\phi_{i}^{''}(\mu_{EP})+\beta_{-i}$ with $\phi_{p}^{''}(\mu_{EP})$:
\begin{eqnarray}
m_{3}^{i} & = & \left(\phi_{i}^{''}(\mu_{EP})+\beta_{-i}\right)^{-1}\left(\left(\phi_{i}^{'}(\mu_{EP})+\beta_{-i}\mu_{EP}-r_{-i}\right)v_{EP}+\frac{\phi_{i}^{(3)}(\mu_{EP})}{2}m_{4}^{i}\right)+\O\left(n^{-4}\right)\nonumber \\
 & = & v_{EP}\left(\left(\phi_{i}^{'}(\mu_{EP})+\beta_{-i}\mu_{EP}-r_{-i}\right)v_{EP}+\frac{\phi_{i}^{(3)}(\mu_{EP})}{2}m_{4}^{i}\right)+\O\left(n^{-2}n^{-2}\right)+\O\left(n^{-4}\right)
\end{eqnarray}

Now, we replace $m_{4}^{i}$ with $m_{4}^{p}$. Since, $m_{4}^{i}=m_{4}^{p}+\O\left(n^{-3}\right)$,
we have:
\begin{equation}
m_{3}^{i}=v_{EP}\left(\left(\phi_{i}^{'}(\mu_{EP})+\beta_{-i}\mu_{EP}-r_{-i}\right)v_{EP}+\frac{\phi_{i}^{(3)}(\mu_{EP})}{2}m_{4}^{p}\right)+\O\left(n^{-4}\right)
\end{equation}

which we finally sum for $i$: the $\beta_{-i}\mu_{EP}-r_{-i}$ sum
to $0$, leaving:

\begin{equation}
\sum_{i}m_{3}^{i}=v_{EP}\left(\phi_{p}^{'}(\mu_{EP})v_{EP}+\frac{\phi_{p}^{(3)}(\mu_{EP})}{2}m_{4}^{p}\right)+\O\left(n^{-3}\right)\label{eq: almost there for m_3 equality}
\end{equation}

Because, $\mu=\mu_{EP}+\O\left(n^{-2}\right)$ and $v=\left(\phi_{p}^{''}(\mu)\right)^{-1}+\O\left(n^{-2}\right)=v_{EP}+\O\left(n^{-2}\right)$,
$\sum_{i}m_{3}^{i}$ and $m_{3}^{p}$ have identical first order expansions
(which is of order $n^{-2}$). More precisely:
\begin{eqnarray}
\phi_{p}^{'}(\mu_{EP})v_{EP}+\frac{\phi_{p}^{(3)}(\mu_{EP})}{2}m_{4}^{p} & = & \phi_{p}^{'}(\mu_{EP})v+\frac{\phi_{p}^{(3)}(\mu_{EP})}{2}m_{4}^{p}+\O\left(n^{-2}\right)\\
 & = & \phi^{'}(\mu)v+\frac{\phi^{(3)}(\mu)}{2}m_{4}^{p}+\O\left(n^{-2}\right)+\O\left(\left|\mu-\mu_{EP}\right|\right)
\end{eqnarray}
because: $\left|\phi^{'}(\mu)-\phi^{'}(\mu_{EP})\right|\leq n\beta_{m}\left|\mu-\mu_{EP}\right|$
and, similarly, $\phi^{(3)}(\mu)-\phi^{(3)}(\mu_{EP})=\O\left(n\left|\mu-\mu_{EP}\right|\right)$.
And:
\begin{eqnarray}
\left(v-v_{EP}\right)\left(\phi_{p}^{'}(\mu_{EP})v_{EP}+\frac{\phi_{p}^{(3)}(\mu_{EP})}{2}m_{4}^{p}\right) & = & \O\left(\left|v-v_{EP}\right|\right)\left(\O\left(1\right)\O\left(n^{-1}\right)+\O\left(n\right)\O\left(n^{-2}\right)\right)\nonumber \\
 & = & \O\left(n^{-3}\right)
\end{eqnarray}

Which gives us the final expression:

\begin{equation}
m_{3}=\sum_{i}m_{3}^{i}+\O\left(n^{-3}\right)\label{eq: approximation for m_3 =00003D sum m_3 ^i GOT THERE}
\end{equation}

which concludes our proofs on the quality of the EP approximation.

In the main text, we have also used the following relationship, detailing
the second order expansion of $v_{EP}^{-1}$:
\begin{equation}
v_{EP}^{-1}=\phi_{p}^{''}(\mu_{EP})+\sum_{i}\left[\phi_{i}^{(3)}(\mu_{EP})\frac{m_{3}^{i}}{2v_{EP}}\right]+\phi_{p}^{(4)}(\mu_{EP})\frac{3v_{EP}^{2}}{3!v_{EP}}+\O\left(n^{-1}\right)\label{eq: APPENDIX second order relationship for v_EP ^-1}
\end{equation}

For the inquisitive reader, this is obtained by starting from our
Brascamp-Lieb extension, eq. \eqref{eq: appendix second order v^-1 relationship},
applied to all hybrids. Then proceeding to approximate $m_{4}^{i}\approx3v_{EP}^{2}$
and summing. \end{proof}

\end{document}